\newtheorem{theorem}{Theorem}
\newtheorem{corollary}[theorem]{Corollary}
\newtheorem{question}{Question}
\newtheorem{open}{Open problem}
\newtheorem{observation}[theorem]{Observation} 
\author{Laurent Feuilloley, Josef Erik Sedláček, Martin Slávik}
\title{Proving there is a leader without naming it}
\begin{document}
\maketitle

\begin{abstract}
Local certification is a mechanism for certifying to the nodes of a network that a certain property holds. In this framework, nodes are assigned labels, called certificates, which are supposed to prove that the property holds. The nodes then communicate with their neighbors to verify the correctness of these certificates.

Certifying that there is a unique leader in a network is one of the most classical problems in this setting. It is well-known that this can be done using certificates that encode node identifiers and distances in the graph. These require $O(\log n)$ and $O(\log D)$ bits respectively, where $n$ is the number of nodes and $D$ is the diameter. A matching lower bound is known in cycle graphs (where $n$ and $D$ are equal up to multiplicative constants).

A recent line of work has shown that network structure greatly influences local certification. For example, certifying that a network does not contain triangles takes $\Theta(n)$ bits in general graphs, but only $O(\log n)$ bits in graphs of bounded treewidth. This observation raises the question: Is it possible to achieve sublogarithmic leader certification in graph classes that do not contain cycle graphs? And since in that case we cannot write identifiers in a certificate, do we actually need identifiers at all in such topologies? We prove the following results.

\begin{itemize}
    \item For graphs with constant diameter, which is the most natural way to rule out cycle graphs in this context, $\Omega(\log n)$-bit certificates are still required.  
    \item For anonymous chordal graphs and anonymous grids, a leader can be certified using $O(\log D)$ bits. In particular, in chordal graphs with small enough diameter, we get a sublogarithmic certification. These two graph classes enforce in two different ways the fact that, unlike cycle graphs, the network has no large ``hole''.
    \item In graphs with logarithmic minimum degree, we obtain an $O(\log \log n)$ certification for leader. Here, identifiers are actually used, but they are not stored verbatim in the certificates. This indicates that sparsity also plays a role in the fact that cycle graphs are hard instances. 
\end{itemize}

We also discuss the types of properties to which these results apply beyond leader election, as well as the impact of the identifier range.
\end{abstract}


 \thispagestyle{empty}

\clearpage
\setcounter{page}{1}

\section{Introduction}

\paragraph{Local certification and certification size}
Local certification is a model of distributed computing, where the nodes of a network have to verify a property, with the help of an oracle. 
More precisely, each node of the network is first given a \emph{certificate} by a \emph{prover}, then it communicates with its neighbors, and finally outputs a binary decision \emph{accept} or \emph{reject}~\cite{Feuilloley21}. Such a scheme is correct for verifying a given property if the following holds: for any network configuration, there exists an assignment of certificates such that all nodes accept, if and only if, the configuration satisfies the property.
One can consider properties of the graph itself, such as having an odd number of nodes, or properties of an input labeling, for example \emph{having a unique leader} consists of having exactly one node with input label 1 (selected) and all other nodes with input label 0 (non-selected). 
 
The main measure of quality of a local certification is the size of the certificates used, which one wants to minimize. The certificates are simply bit strings used as labels, hence this size is a number of bits. 
Another aspect is the communication radius of the nodes. In the original model, \emph{proof-labeling schemes}~\cite{KormanKP10}, the radius is 1, which means that every node just sends and receives one message from each (direct) neighbor. 
A more general model, \emph{locally checkable proof} model~\cite{GoosS16}, allows for any constant radius.

\paragraph{Example: Certifying a leader}
Let us describe the classic way to certify that there exists a \emph{unique leader} in a graph.
This will serve as an example for the notions given above and will also be the key problem in this paper.

On a correct instance, the prover does the following:
\begin{enumerate}
    \item It writes the identifier of the selected node into the certificate of all nodes. 
    \item It chooses a spanning tree rooted at the selected node
    \item It writes into the certificate of every node the identifier of its parent in the tree, as well as its distance to the root (in the tree). (If the node is the root there is no pointer to the parent.)
\end{enumerate}

Every node can check the local consistency of this labeling by communicating with its neighbors. More precisely, after reading the certificates of its neighbors and its own it can check the following:
\begin{enumerate}
    \item The claimed identifier of the root is the same for itself and its neighbors. 
    \item If the node has label 1 (selected), then its identifier is the one appearing in the certificate, it has not been given a parent identifier, and its distance is 0. 
    \item Otherwise, it has label 0, has been given a parent identifier, which corresponds to a neighbor, and whose distance is one less than its own.  
\end{enumerate}.

It is clear that in correct instances the prover's strategy described above makes all nodes accept. Now, intuitively, if the instance is not correct, then either there are several leaders and this is detected thanks to the leader identifier of the certificates, or there is no leader and this is detected when checking the spanning tree, which cannot point to a selected node. 

On graphs with $n$ nodes and diameter $D$, this certification uses $O(\log n+\log D)$ bits. Indeed, the standard hypothesis is that identifiers are encoded on $O(\log n)$ bits and the distances can be encoded on $O(\log D)$ bits. If we do not insist on making the diameter appears explicitly in the complexity this boils down to $O(\log n)$, since $D\leq n$.

Having a leader is actually one of the many problems for which the optimal certification size is $\Theta(\log n)$, and it is an essential primitive for all these problems. Let us mention that the classic terminology in distributed computing is \emph{leader election}, but because there is no election here, we will refer to this problem as \emph{(having a) unique leader}.

\paragraph{How does the topology impact the hardness of certification?}

In the domain of local certification, there are important recent efforts to understand how the topology of the graph influences the hardness of certification. Or in other words, if we assume that the graph belongs to some structurally constrained graph class, does it make it easier to certify some property? And if yes, what properties?

In that direction, we have seen a lot of progress on the upper bound side, in particular in the \emph{compact certification} regimes, that is, the certifications of (at most) logarithmic size. These take the form of meta-theorems inspired by the meta-theorem in algorithmic graph theory~\cite{Kreutzer11, SiebertzV24}.
More precisely, they are theorems of the following form: In graph class X, certifying a property of type Y can be done with $O(\log n)$ bits, where X is typically defined by a bounded parameter (\emph{e.g.} treewidth), and Y is a logic capturing many classic properties (first order logic, or monadic second order logic)~\cite{FeuilloleyBP22, FraigniaudMRT24, FraigniaudM0RT23, Cook0M25}. 

The impact of such structural restrictions can sometimes be spectacular. For example, certifying that a graph has no triangles requires a polynomial number of bits in general graphs, but one of these meta-theorems (\cite{Cook0M25}) automatically implies that logarithmic certificates are enough in bounded treewidth graphs. 
This line of work has had impact beyond local certification, when the ideas have been transferred to the CONGEST model, see~\cite{FominFGMRT24, FominF0RT24, FominFG0RT25}. 

There is nothing similar for lower bounds. That is, the lower bounds are proved in a few very specific graphs, and if one wants to understand the hardness of verifying a property in a graph class that does not contain these special graphs, then there is simply no lower bound available.  
To our knowledge, there are three lower bound techniques for proving $\Omega(\log n)$ lower bounds. The best known lower bound is by Göös and Suomela~\cite{GoosS16} (generalizing a proof of~\cite{KormanK07}) but it is only proved for cycles.  
The paper \cite{FeuilloleyH18} also has a logarithmic lower bound that applies to different problems, but it is also only for cycles.
Finally, some bounds are obtained by reduction to 2-party communication complexity (see \emph{e.g.}~\cite{Censor-HillelPP20} ), but again the structure of the lower bound instances is very rigid, in the sense that it must encode a fixed partition into two well-structured parts.

This situation indicates that we might get new meta-theorems in the $o(\log n)$ regime for interesting properties, if only we can rule out cycles and structures coming from communication complexity reductions. 
In particular, the neighborhood diversity parameter~\cite{Lampis12} in promising since when it is bounded, the diameter is bounded too, which rules out the first obstacle, and also the structure of the graph must be very simple which rules out the second. 

\paragraph{The Göös-Suomela lower bound}

In this paper, we will in particular focus on the Göös-Suomela lower bound, and try to understand to which extent it can be generalized and what are its inherent limitations.\footnote{Throughout the paper, we will use \emph{Göös-Suomela lower bound} to refer to the $\Omega(\log n)$ lower bound in \cite{GoosS16}, but it should be noted that the same paper has several important lower bounds in the quadratic regime too.} 

In order to have a more informed discussion on the power and limitations of the Göös-Suomela lower bound, let us sketch the technique. Consider the task of verifying that there is a unique leader and assume that there exists a local certification with $o(\log n)$ bits for it. 
We consider a family of \emph{yes}-instances, which are cycle graphs with exactly one leader, with specific identifier assignments.\footnote{Throughout the paper, we will use the words \emph{cycle graph} to insist on the fact that the full graph considered is a cycle.}
From these, we can create a family of new instances by cutting several of these cycles and plugging them into a larger cycle. 
These are \emph{no}-instances because they have several leaders, but a counting argument shows that at least one of these must be accepted. 
This is a contradiction with the correctness of the certification scheme, and establishes the $\Omega(\log n)$ bound.

In the original paper~\cite{GoosS16}, the lower-bound technique is described as a general framework and applied to a few concrete problems.
Intuitively, the lower bound applies to properties where it is required that \emph{some pattern appears at most once} in the graph. 
Having a unique leader is such a property, since we want at most one leader. 
Actually, for this paper it is relevant to see unique leader as the conjunction of two problems: At-Most-One-Selected (AMOS)~\cite{FraigniaudKP13} and At-Least-One-Selected (ALOS)~\cite{FeuilloleyH18}. 
In this paper, for all results AMOS and unique leader are equivalent, and we use AMOS which better captures the intuition of the lower bound.

\paragraph{In which topologies does the Göös-Suomela technique apply?}

In the proof of the Göös-Suomela lower bound, the structure of the lower bound instances is very rigid, since they are cycle graphs only. 
It is a strength, in the sense that it shows that even in very restricted setting the lower bound holds, but it is also a weakness. 
Indeed, the proof does not apply to settings where we restrict ourselves to a graph class that does not contain infinitely many cycle graphs. In particular, unlike other lower bounds, it is not sufficient to have a cycle in the graph to apply the result. The main question of this paper is the following:

\begin{question}
    What are the graph structural constraints that allow a $\Omega(\log n)$ lower bound, and what are the ones that allow a $o(\log n)$ upper bound? 
\end{question}

Since identifiers are encoded on $\Theta(\log n)$ bits, this question is tightly connected to a second question.

\begin{question}\label{question:ID}
    What are the graph structural constraints that allow certification in anonymous graphs? 
\end{question}

In order to answer these questions, 
we will inspect the natural features of cycle graphs one after the other. 
A first aspect is that cycles are obviously not acyclic. The complexity of leader election in trees is known to be constant: it is sufficient to give every node its distance modulo three from the leader. The reason why this works is that it basically gives an orientation to each edge, and having a unique sink is easy to check locally in trees.
Note that in comparison with the classic scheme, in trees, not only can one avoid identifiers in the certificates, but also compress the distance to only three different labels.

Let us now explore other structural aspects of cycle graphs in more detail. 

\paragraph{The role of the diameter}

Another important aspect of cycles is their linear diameter, which is especially relevant for our problem since the diameter appears naturally in the bounds. 
A natural question is whether in low-diameter graphs the logarithmic lower bound still holds or not.
Our first result is to prove that it does.

\begin{restatable}{theorem}{ThmDiameter}
\label{thm:diameter}
    Certifying AMOS in a graph with bounded diameter requires certificates of size $\Omega(\log n)$.
\end{restatable}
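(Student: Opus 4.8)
The plan is to reduce to the Göös--Suomela lower bound on cycles~\cite{GoosS16} by \emph{embedding} its hard instances inside a family of constant-diameter graphs. Fix the verification radius $t$. The Göös--Suomela argument for AMOS supplies, for every $m$, a collection of $\mathrm{poly}(m)$ \emph{yes}-instances, each a cycle $C_m$ with a single leader and a distinct identifier pattern, engineered so that two of these cycles can be cut open along a designated constant-length segment and glued back together while keeping all radius-$t$ views consistent away from the leaders; the resulting instance carries two leaders and is therefore a \emph{no}-instance. On top of such a cycle I would attach a \emph{connector}: a set of auxiliary vertices, each made adjacent to one short arc of the cycle, with the auxiliary vertices themselves forming a low-diameter ``well-spread'' graph (for instance a balanced Hamming-graph-like structure of diameter roughly $2t$), the sizes chosen so that the whole graph has $n=\Theta(m)$ vertices, has diameter bounded by a constant larger than $t$, and yet every radius-$t$ ball still misses a $1-o(1)$ fraction of the vertices --- so that no single vertex can detect an AMOS violation on its own.

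With this in hand the argument is the standard counting/recombination step. Suppose, for contradiction, that AMOS is certified on this class with $b=o(\log n)$ bits and radius $t$. On the $O(1)$-sized designated splicing segment the prover can write only $2^{O(b)}=n^{o(1)}$ distinct certificate patterns, whereas there are $\mathrm{poly}(n)$ accepted yes-instances differing only in their cycle identifier pattern; hence two of them, $I$ and $I'$, receive the \emph{same} identifiers and certificates on that segment. Splicing the arc of $I$ on one side of the segment to the arc of $I'$ on the other side gives a graph in the same (constant-diameter) class with two leaders, and by construction the radius-$t$ view of every vertex equals its view in $I$ or in $I'$, so every vertex accepts --- contradicting correctness. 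Hence $b=\Omega(\log n)$.

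The step I expect to be the main obstacle is making the splice invisible to the \emph{connector} vertices. Because these vertices have polynomial degree and sit in a low-diameter structure, a single connector vertex ``sees'' certificates originating all over the graph, so a naive local swap of one arc changes its view. I would overcome this in one of two ways. The cleaner option is to perform the recombination as a \emph{symmetric, global} operation rather than a local one: arrange the cycle-to-connector attachment and the choice of the splicing segment so that the splice is accompanied by a relabelling of auxiliary vertices that is an automorphism of the connector, making each connector vertex's multiset of neighbouring certificates literally unchanged. The alternative is to argue that the prover gains nothing from the connector --- that in any accepting assignment the auxiliary certificates may be assumed to come from a bounded universe independent of the cycle identifier pattern (any finer information a bounded auxiliary certificate could carry about the cycle would, by the same counting, fail to separate $I$ from $I'$) --- after which the connector plays no role in the recombination. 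Two routine points also need care: the identifiers of the auxiliary vertices must be chosen generically (or quantified over) so the prover cannot encode information in their order, and one must check that the arc length, the segment length and the connector dimension can all be fixed as the constants required once $t$ is fixed.
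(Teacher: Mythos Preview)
Your proposal identifies the right target but takes an unnecessarily complicated route, and the obstacle you yourself flag is a real gap that neither of your proposed fixes actually closes.

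The difficulty you isolate is genuine: once every cycle vertex is adjacent to the connector, the radius-$t$ view of a cycle vertex on the splicing segment includes the certificates of the adjacent auxiliary vertices \emph{and}, through them, a potentially polynomial-size chunk of the connector. The counting argument then has to match on all of those bits, not just on the $O(1)$ cycle certificates of the segment, and the $n^{o(1)}$ bound on patterns collapses. Your ``automorphism'' fix does not help, because after splicing the cycle vertices on the segment carry identifiers and certificates from two different instances, so no automorphism of the connector can leave every auxiliary vertex's multiset of neighbouring (identifier, certificate) pairs invariant. Your ``the prover gains nothing from the connector'' fix is a statement that itself requires a proof at least as hard as the theorem: the prover is free to write instance-dependent information on auxiliary vertices, and there is no a priori reason those certificates can be normalised away.

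The paper sidesteps all of this with a much simpler construction: it does \emph{not} attach shortcuts to the splice region at all. Each yes-instance is a constant-length path (the ``thin'' part, of length about $4r$) whose two far endpoints are plugged into a single large clique containing the remaining $n-O(1)$ vertices. The diameter is bounded by a constant because almost all vertices sit in the clique, yet the splice edge lies in the middle of the path, at distance greater than $r$ from the clique. Hence the radius-$r$ view around the splice is \emph{exactly} a path, and the Göös--Suomela counting goes through verbatim on that $O(1)$-vertex window; the clique is invisible to the argument. Only two instances need to be combined (a $4$-cycle in the colour graph, via K\H{o}v\'ari--S\'os--Tur\'an, suffices), so the connected graph still has bounded diameter. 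The moral you are missing is that the diameter-reducing bulk need not touch the cut; put it elsewhere.
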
 

This theorem undermines the hopes of establishing a sublogarithmic meta-theorem based on neighborhood diversity (or similar parameters) that we mentioned before (and indeed it is not hard to see that Theorem~\ref{thm:diameter} extends to bounded neighborhood diversity).

The proof of this theorem revisits the proof of \cite{GoosS16}, with two essential modifications. Although this is the case in the original proof, one does not need the \emph{yes}-instances to be connections of long paths: we can replace them by paths of constant length with an arbitrary dense and large part added at one end.
Second, one can combine four of these small parts, hence keeping the diameter under control, when the original proof was just considering arbitrary combinations.

At an intuitive level, what is needed for the proof to work is to have some ``hole'' in the graph, and that around this hole there are at least two places that are ``thin''. See Figure~\ref{fig:hole-chordal-grid}.
However, good and general definitions of what a hole is and what it means for a region to be thin remain elusive. 
In the following, we prove that one can go below $\Omega(\log n)$ ruling out holes, for a specific definition, or ruling out sparse graphs. But we also show that these definitions do not capture all the relevant cases.

\begin{figure}[!h]
    \centering
    \begin{tabular}{ccc}
    \begin{minipage}{0.3\textwidth}
        \tikzset{every picture/.style={line width=0.75pt}} 

\begin{tikzpicture}[x=0.75pt,y=0.75pt,yscale=-1,xscale=1]

\draw  [fill={rgb, 255:red, 185; green, 185; blue, 185 }  ,fill opacity=1 ] (107.97,63.73) .. controls (116.18,54.66) and (137.1,48.94) .. (159.74,47.02) .. controls (189.48,44.49) and (222.19,48.49) .. (233,60) .. controls (252.03,80.27) and (249.97,117.73) .. (236.97,135.73) .. controls (223.97,153.73) and (120.97,154.73) .. (104.97,132.73) .. controls (94.5,118.34) and (90.88,102.24) .. (94.95,86.94) .. controls (97.1,78.86) and (101.4,71) .. (107.97,63.73) -- cycle ;
\draw  [fill={rgb, 255:red, 255; green, 255; blue, 255 }  ,fill opacity=1 ] (134.97,68.73) .. controls (143.97,56.73) and (146.84,50.88) .. (169.9,51.31) .. controls (192.97,51.73) and (191.97,50.73) .. (203.97,62.73) .. controls (215.97,74.73) and (218.97,121.73) .. (208.97,132.73) .. controls (198.97,143.73) and (188.02,144.77) .. (170.95,144.68) .. controls (153.89,144.59) and (141.57,141.82) .. (134.97,132.73) .. controls (131.07,127.38) and (128.64,110.28) .. (129.15,94.67) .. controls (129.5,83.8) and (131.27,73.66) .. (134.97,68.73) -- cycle ;
\draw  [dash pattern={on 4.5pt off 4.5pt}] (147.9,41.31) -- (191.9,41.31) -- (191.9,61.31) -- (147.9,61.31) -- cycle ;
\draw  [dash pattern={on 4.5pt off 4.5pt}] (149.9,137.31) -- (193.9,137.31) -- (193.9,157.31) -- (149.9,157.31) -- cycle ;

\draw (154,91.4) node [anchor=north west][inner sep=0.75pt]    {$Hole$};
\draw (153,21.4) node [anchor=north west][inner sep=0.75pt]    {$Thin$};
\draw (155,165.4) node [anchor=north west][inner sep=0.75pt]    {$Thin$};

\end{tikzpicture}
    \end{minipage} 
         &
    \begin{minipage}{0.3\textwidth}
        \begin{center}
        \tikzset{every picture/.style={line width=0.75pt}} 

\begin{tikzpicture}[x=0.75pt,y=0.75pt,yscale=-1,xscale=1]

\draw  [fill={rgb, 255:red, 0; green, 0; blue, 0 }  ,fill opacity=1 ] (104,119.98) .. controls (104,116.68) and (106.68,114) .. (109.98,114) .. controls (113.29,114) and (115.97,116.68) .. (115.97,119.98) .. controls (115.97,123.29) and (113.29,125.97) .. (109.98,125.97) .. controls (106.68,125.97) and (104,123.29) .. (104,119.98) -- cycle ;
\draw  [fill={rgb, 255:red, 0; green, 0; blue, 0 }  ,fill opacity=1 ] (105,166.98) .. controls (105,163.68) and (107.68,161) .. (110.98,161) .. controls (114.29,161) and (116.97,163.68) .. (116.97,166.98) .. controls (116.97,170.29) and (114.29,172.97) .. (110.98,172.97) .. controls (107.68,172.97) and (105,170.29) .. (105,166.98) -- cycle ;
\draw  [fill={rgb, 255:red, 0; green, 0; blue, 0 }  ,fill opacity=1 ] (151,119.98) .. controls (151,116.68) and (153.68,114) .. (156.98,114) .. controls (160.29,114) and (162.97,116.68) .. (162.97,119.98) .. controls (162.97,123.29) and (160.29,125.97) .. (156.98,125.97) .. controls (153.68,125.97) and (151,123.29) .. (151,119.98) -- cycle ;
\draw  [fill={rgb, 255:red, 0; green, 0; blue, 0 }  ,fill opacity=1 ] (152,166.98) .. controls (152,163.68) and (154.68,161) .. (157.98,161) .. controls (161.29,161) and (163.97,163.68) .. (163.97,166.98) .. controls (163.97,170.29) and (161.29,172.97) .. (157.98,172.97) .. controls (154.68,172.97) and (152,170.29) .. (152,166.98) -- cycle ;
\draw  [fill={rgb, 255:red, 0; green, 0; blue, 0 }  ,fill opacity=1 ] (103,69.98) .. controls (103,66.68) and (105.68,64) .. (108.98,64) .. controls (112.29,64) and (114.97,66.68) .. (114.97,69.98) .. controls (114.97,73.29) and (112.29,75.97) .. (108.98,75.97) .. controls (105.68,75.97) and (103,73.29) .. (103,69.98) -- cycle ;
\draw  [fill={rgb, 255:red, 0; green, 0; blue, 0 }  ,fill opacity=1 ] (166,67.98) .. controls (166,64.68) and (168.68,62) .. (171.98,62) .. controls (175.29,62) and (177.97,64.68) .. (177.97,67.98) .. controls (177.97,71.29) and (175.29,73.97) .. (171.98,73.97) .. controls (168.68,73.97) and (166,71.29) .. (166,67.98) -- cycle ;
\draw  [fill={rgb, 255:red, 0; green, 0; blue, 0 }  ,fill opacity=1 ] (203,108.98) .. controls (203,105.68) and (205.68,103) .. (208.98,103) .. controls (212.29,103) and (214.97,105.68) .. (214.97,108.98) .. controls (214.97,112.29) and (212.29,114.97) .. (208.98,114.97) .. controls (205.68,114.97) and (203,112.29) .. (203,108.98) -- cycle ;
\draw  [fill={rgb, 255:red, 0; green, 0; blue, 0 }  ,fill opacity=1 ] (214,63.98) .. controls (214,60.68) and (216.68,58) .. (219.98,58) .. controls (223.29,58) and (225.97,60.68) .. (225.97,63.98) .. controls (225.97,67.29) and (223.29,69.97) .. (219.98,69.97) .. controls (216.68,69.97) and (214,67.29) .. (214,63.98) -- cycle ;
\draw  [fill={rgb, 255:red, 0; green, 0; blue, 0 }  ,fill opacity=1 ] (198,167.98) .. controls (198,164.68) and (200.68,162) .. (203.98,162) .. controls (207.29,162) and (209.97,164.68) .. (209.97,167.98) .. controls (209.97,171.29) and (207.29,173.97) .. (203.98,173.97) .. controls (200.68,173.97) and (198,171.29) .. (198,167.98) -- cycle ;
\draw [line width=1.5]    (108.98,69.98) -- (109.98,119.98) ;
\draw [line width=1.5]    (109.98,119.98) -- (110.98,166.98) ;
\draw [line width=1.5]    (156.98,119.98) -- (157.98,166.98) ;
\draw [line width=1.5]    (171.98,67.98) -- (156.98,119.98) ;
\draw [line width=1.5]    (208.98,108.98) -- (156.98,119.98) ;
\draw [line width=1.5]    (219.98,63.98) -- (171.98,67.98) ;
\draw [line width=1.5]    (219.98,63.98) -- (208.98,108.98) ;
\draw [line width=1.5]    (156.98,119.98) -- (110.98,166.98) ;
\draw [line width=1.5]    (157.98,166.98) -- (109.98,119.98) ;
\draw [line width=1.5]    (203.98,167.98) -- (156.98,119.98) ;
\draw    (203.98,167.98) -- (157.98,166.98) ;
\draw [line width=1.5]    (156.98,119.98) -- (109.98,119.98) ;
\draw [line width=1.5]    (157.98,166.98) -- (110.98,166.98) ;
\draw [line width=1.5]    (171.98,67.98) -- (208.98,108.98) ;

\end{tikzpicture}
         \end{center}
    \end{minipage}
    & 
    \begin{minipage}{0.3\textwidth}
        \tikzset{every picture/.style={line width=0.75pt}} 

\begin{tikzpicture}[x=0.75pt,y=0.75pt,yscale=-1,xscale=1]

\draw  [fill={rgb, 255:red, 0; green, 0; blue, 0 }  ,fill opacity=1 ] (104,119.98) .. controls (104,116.68) and (106.68,114) .. (109.98,114) .. controls (113.29,114) and (115.97,116.68) .. (115.97,119.98) .. controls (115.97,123.29) and (113.29,125.97) .. (109.98,125.97) .. controls (106.68,125.97) and (104,123.29) .. (104,119.98) -- cycle ;
\draw  [fill={rgb, 255:red, 0; green, 0; blue, 0 }  ,fill opacity=1 ] (105,166.98) .. controls (105,163.68) and (107.68,161) .. (110.98,161) .. controls (114.29,161) and (116.97,163.68) .. (116.97,166.98) .. controls (116.97,170.29) and (114.29,172.97) .. (110.98,172.97) .. controls (107.68,172.97) and (105,170.29) .. (105,166.98) -- cycle ;
\draw  [fill={rgb, 255:red, 0; green, 0; blue, 0 }  ,fill opacity=1 ] (151,119.98) .. controls (151,116.68) and (153.68,114) .. (156.98,114) .. controls (160.29,114) and (162.97,116.68) .. (162.97,119.98) .. controls (162.97,123.29) and (160.29,125.97) .. (156.98,125.97) .. controls (153.68,125.97) and (151,123.29) .. (151,119.98) -- cycle ;
\draw  [fill={rgb, 255:red, 0; green, 0; blue, 0 }  ,fill opacity=1 ] (152,166.98) .. controls (152,163.68) and (154.68,161) .. (157.98,161) .. controls (161.29,161) and (163.97,163.68) .. (163.97,166.98) .. controls (163.97,170.29) and (161.29,172.97) .. (157.98,172.97) .. controls (154.68,172.97) and (152,170.29) .. (152,166.98) -- cycle ;
\draw [line width=1.5]    (109.98,119.98) -- (110.98,166.98) ;
\draw [line width=1.5]    (156.98,119.98) -- (157.98,166.98) ;
\draw [line width=1.5]    (156.98,119.98) -- (137.3,119.98) -- (109.98,119.98) ;
\draw [line width=1.5]    (157.98,166.98) -- (110.98,166.98) ;
\draw  [fill={rgb, 255:red, 0; green, 0; blue, 0 }  ,fill opacity=1 ] (200,119.98) .. controls (200,116.68) and (202.68,114) .. (205.98,114) .. controls (209.29,114) and (211.97,116.68) .. (211.97,119.98) .. controls (211.97,123.29) and (209.29,125.97) .. (205.98,125.97) .. controls (202.68,125.97) and (200,123.29) .. (200,119.98) -- cycle ;
\draw  [fill={rgb, 255:red, 0; green, 0; blue, 0 }  ,fill opacity=1 ] (201,166.98) .. controls (201,163.68) and (203.68,161) .. (206.98,161) .. controls (210.29,161) and (212.97,163.68) .. (212.97,166.98) .. controls (212.97,170.29) and (210.29,172.97) .. (206.98,172.97) .. controls (203.68,172.97) and (201,170.29) .. (201,166.98) -- cycle ;
\draw  [fill={rgb, 255:red, 0; green, 0; blue, 0 }  ,fill opacity=1 ] (247,119.98) .. controls (247,116.68) and (249.68,114) .. (252.98,114) .. controls (256.29,114) and (258.97,116.68) .. (258.97,119.98) .. controls (258.97,123.29) and (256.29,125.97) .. (252.98,125.97) .. controls (249.68,125.97) and (247,123.29) .. (247,119.98) -- cycle ;
\draw  [fill={rgb, 255:red, 0; green, 0; blue, 0 }  ,fill opacity=1 ] (248,166.98) .. controls (248,163.68) and (250.68,161) .. (253.98,161) .. controls (257.29,161) and (259.97,163.68) .. (259.97,166.98) .. controls (259.97,170.29) and (257.29,172.97) .. (253.98,172.97) .. controls (250.68,172.97) and (248,170.29) .. (248,166.98) -- cycle ;
\draw [line width=1.5]    (205.98,119.98) -- (206.98,166.98) ;
\draw [line width=1.5]    (252.98,119.98) -- (253.98,166.98) ;
\draw [line width=1.5]    (252.98,119.98) -- (205.98,119.98) ;
\draw [line width=1.5]    (253.98,166.98) -- (206.98,166.98) ;
\draw [line width=1.5]    (205.98,119.98) -- (186.3,119.98) -- (156.98,119.98) ;
\draw [line width=1.5]    (206.98,166.98) -- (187.3,166.98) -- (157.98,166.98) ;
\draw [line width=1.5]    (155.98,72.98) -- (156.98,119.98) ;
\draw  [fill={rgb, 255:red, 0; green, 0; blue, 0 }  ,fill opacity=1 ] (150,72.98) .. controls (150,69.68) and (152.68,67) .. (155.98,67) .. controls (159.29,67) and (161.97,69.68) .. (161.97,72.98) .. controls (161.97,76.29) and (159.29,78.97) .. (155.98,78.97) .. controls (152.68,78.97) and (150,76.29) .. (150,72.98) -- cycle ;
\draw [line width=1.5]    (108.98,72.98) -- (109.98,119.98) ;
\draw  [fill={rgb, 255:red, 0; green, 0; blue, 0 }  ,fill opacity=1 ] (103,72.98) .. controls (103,69.68) and (105.68,67) .. (108.98,67) .. controls (112.29,67) and (114.97,69.68) .. (114.97,72.98) .. controls (114.97,76.29) and (112.29,78.97) .. (108.98,78.97) .. controls (105.68,78.97) and (103,76.29) .. (103,72.98) -- cycle ;
\draw [line width=1.5]    (155.98,72.98) -- (136.3,72.98) -- (108.98,72.98) ;
\draw [line width=1.5]    (204.98,72.98) -- (205.98,119.98) ;
\draw [line width=1.5]    (251.98,72.98) -- (252.98,119.98) ;
\draw  [fill={rgb, 255:red, 0; green, 0; blue, 0 }  ,fill opacity=1 ] (199,72.98) .. controls (199,69.68) and (201.68,67) .. (204.98,67) .. controls (208.29,67) and (210.97,69.68) .. (210.97,72.98) .. controls (210.97,76.29) and (208.29,78.97) .. (204.98,78.97) .. controls (201.68,78.97) and (199,76.29) .. (199,72.98) -- cycle ;
\draw  [fill={rgb, 255:red, 0; green, 0; blue, 0 }  ,fill opacity=1 ] (246,72.98) .. controls (246,69.68) and (248.68,67) .. (251.98,67) .. controls (255.29,67) and (257.97,69.68) .. (257.97,72.98) .. controls (257.97,76.29) and (255.29,78.97) .. (251.98,78.97) .. controls (248.68,78.97) and (246,76.29) .. (246,72.98) -- cycle ;
\draw [line width=1.5]    (251.98,72.98) -- (204.98,72.98) ;
\draw [line width=1.5]    (204.98,72.98) -- (185.3,72.98) -- (155.98,72.98) ;

\end{tikzpicture}
    \end{minipage}
    \end{tabular}
    \caption{Left picture: illustration of the type of graphs for which the Göös-Suomela proof works, with a hole and two thin parts (the graph is the gray part). Middle picture: a chordal graph. Right picture: a 2-by-3 grid.}
    \label{fig:hole-chordal-grid}
\end{figure}
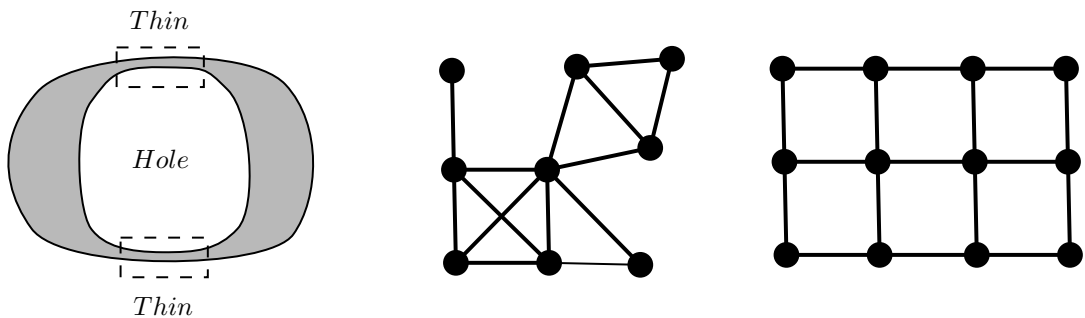

\paragraph{The role of the ``hole''}

A possible definition for a hole is a large enough induced cycle, that is, a set of nodes such that the edge set restricted to this set of nodes forms a cycle.
Graphs with no induced cycles of some length are well-studied in graph theory. 
The most classic such class (beside trees) is the class of \emph{chordal graphs}, which are the graphs that have no induced cycles of length larger than 3. 
These are basically the graphs that can be organized as trees of cliques, and they are a good example of a class for which there is no hole in the intuitive sense. See Figure~\ref{fig:hole-chordal-grid} for an example.

For chordal graphs, we show that there exists a local certification that only depends on the diameter, hence, being sublogarithmic if the diameter is small enough. And in the spirit of Question~\ref{question:ID}, we prove that we can do so in anonymous graphs.

\begin{restatable}{theorem}{ThmChordal}
\label{thm:chordal}
In anonymous graphs, there exists a $2$-local proof-labeling scheme $(f,\mathcal{A})$ of size $O(\log D)$ that accepts a chordal graph $G$ if and only if $G\in\mathcal{P}$, that is, if $G$ has at most one selected vertex.
\end{restatable}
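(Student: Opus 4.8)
The plan is to keep the classical ``leader $+$ spanning structure'' certificate but to replace every identifier by a single number per vertex, namely its distance to the selected vertex. On a \emph{yes}-instance the prover does the following: if a vertex $s$ is selected it writes on each vertex $v$ the value $\ell(v)=d_G(v,s)\in\{0,\dots,D\}$; if no vertex is selected it picks an arbitrary vertex $r$ and writes $\ell(v)=d_G(v,r)$. This costs $O(\log D)$ bits. The verifier, looking at the ball of radius $2$ around each vertex $v$, checks: (i) if $v$ is selected then $\ell(v)=0$; (ii) if $\ell(v)>0$ then some neighbor of $v$ has value $\ell(v)-1$; (iii) $|\ell(u)-\ell(v)|\le 1$ for every neighbor $u$ of $v$; (iv) if $\ell(v)=0$ then no neighbor of $v$ has value $0$; and a \emph{chordal-consistency} condition (v): the neighbors of $v$ with value $\ell(v)-1$ form a clique, and for every neighbor $u$ of $v$ with $\ell(u)=\ell(v)$ the neighbors of $v$ with value $\ell(v)-1$ together with the neighbors of $u$ with value $\ell(u)-1$ form a clique. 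Conditions (iv)--(v) only involve edges inside $B_2(v)$, so the scheme is $2$-local and of size $O(\log D)$.

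For \textbf{completeness} one has to check that the honest labeling $\ell=d_G(\cdot,s)$ satisfies (i)--(v) on any chordal $G$. Conditions (i)--(iv) are immediate from the fact that BFS from a single source has exactly one vertex at distance $0$. Condition (v) is where chordality enters: it is the statement that in a BFS layering $L_0,L_1,\dots$ of a chordal graph, the set $N(v)\cap L_{i-1}$ of lower neighbors of any $v\in L_i$ is a clique, and for adjacent $x,y\in L_i$ the set $(N(x)\cup N(y))\cap L_{i-1}$ is a clique. I would prove this by induction on $i$ via the standard chordless-cycle argument: assuming two such vertices are non-adjacent, one closes a cycle through them by walking up the BFS tree; every candidate chord of that cycle is either ruled out because it would join two layers at distance $\ge 2$ (impossible) or, if it is ``horizontal'', it produces a strictly shorter cycle of the same type, eventually yielding a chordless cycle of length $\ge 4$, contradicting chordality (the inductive hypothesis is used to dispose of the horizontal chords at lower layers).

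For \textbf{soundness}, suppose all checks pass and set $A=\ell^{-1}(0)$. From (ii) and (iii) one gets $\ell(v)=d_G(v,A)$ for all $v$ (walk down along (ii) to reach $A$ in exactly $\ell(v)$ steps; any shortest path from $v$ to $A$ bounds $\ell(v)$ from above via (iii)); in particular $A\neq\emptyset$, since otherwise the values could decrease indefinitely. It remains to prove $|A|\le 1$, for then (i) gives at most one selected vertex, and when there are zero or one the honest labeling above is accepted, so the scheme is correct. Assume $|A|\ge 2$ and take $u\neq v$ in $A$ at minimum distance $L=d_G(u,v)$; by (iv) $u\not\sim v$, so $L\ge 2$. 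Let $w_0\dots w_L$ be a shortest $u$--$v$ path. Minimality of $L$ forces $\ell(w_j)=\min(j,L-j)$ for internal $w_j$: any $a\in A\setminus\{u,v\}$ is at distance $\ge L$ from $u$ and from $v$, hence at distance $\ge\max(j,L-j)$ from $w_j$, so the closest vertex of $A$ to $w_j$ is $u$ or $v$. If $L=2m$ then $w_{m-1}$ and $w_{m+1}$ are both neighbors of $w_m$ with value $m-1$, so by the first part of (v) they are adjacent, and replacing $w_m$ by the edge $w_{m-1}w_{m+1}$ gives a $u$--$v$ walk of length $L-1$, contradicting the minimality of $L$. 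If $L=2m+1$ then $w_m w_{m+1}$ is an edge with $\ell(w_m)=\ell(w_{m+1})=m$, while $w_{m-1}$ (a neighbor of $w_m$) and $w_{m+2}$ (a neighbor of $w_{m+1}$) both have value $m-1$; the second part of (v) applied to the pair $w_m,w_{m+1}$ forces $w_{m-1}\sim w_{m+2}$, and short-cutting gives a $u$--$v$ walk of length $L-2$, again a contradiction. Hence $|A|\le 1$.

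The \textbf{main obstacle} is getting condition (v) right: it must be checkable within radius $2$, it must hold for the honest BFS labeling of \emph{every} chordal graph, and it must be strong enough that the ``folded'' labelings which make cycle graphs hard cannot pass it. The genuinely delicate point, both for the structural lemma in the completeness direction and for the short-cutting in the soundness direction, is the behavior of edges internal to a BFS layer --- this is exactly where a fold can hide between two vertices at the same distance, and it is where the absence of long holes in chordal graphs is really used; the layer-by-layer chordless-cycle induction sketched above is the technical heart of the proof.
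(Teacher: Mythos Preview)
Your proposal is correct and follows essentially the same approach as the paper: the certificate is the BFS distance to the (arbitrary, if none selected) leader, the verifier is $2$-local, and the chordality of $G$ is exploited via clique conditions on the ``lower neighbors'' $N(\cdot)\cap L_{i-1}$. Your condition (v) first part coincides with the paper's rule R2; your second part is a clean strengthening of the paper's rule R3 (you ask that $(N(v)\cup N(y))\cap L_{i-1}$ be a full clique, whereas R3 only forbids the specific pattern ``$u\in N(v)\cap L_{i-1}$ with $u\notin N(y)$, while $y$ has a lower neighbor outside $N(v)$''). Both versions hold on honest BFS labelings of chordal graphs and both suffice for soundness, so this difference is cosmetic.

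The one genuine divergence is in the \emph{soundness argument}. The paper fixes a vertex $s$ with $P(s)=0$ and proves by induction on $d(\cdot,s)$ that $P(v)=d(v,s)$ for all $v$, which a posteriori forces $|\ell^{-1}(0)|=1$. You instead first show $\ell(v)=d(v,A)$ for $A=\ell^{-1}(0)$ and then rule out $|A|\ge 2$ by your short-cutting argument at the midpoint of a shortest path between two closest members of $A$. Your route is a bit more direct and makes transparent exactly which part of (v) is used where (first part for even $L$, second part for odd $L$); the paper's inductive argument, on the other hand, yields the stronger conclusion that \emph{any} accepted labeling is literally a BFS labeling from a single source. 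Either argument is fine, and your identification of the ``horizontal'' edges inside a BFS layer as the delicate point matches the paper's R3 exactly.
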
 

The certificates we use are very simple: they are only the distances to the selected node. 
The challenging part is to design the verification made at the nodes, and to prove correctness. 
For this the key part is to understand and use what the neighborhood of a node looks like in chordal graphs, which is also why we need radius-2, to be able to see these connections.

In the context of the paper, the take-away of this theorem is that defining holes as forbidden induced subgraphs leads to ``hole-free'' graph of small diameter breaking the lower bound, and that identifiers are not needed. 
But there are other cases where we can avoid identifiers, with a less conservative definition of a hole. An example is grid graphs (see Figure~\ref{fig:hole-chordal-grid}), where there are large induced cycles: consider for example the nodes at the border of the grid, they form an induced cycle of linear length. In these graphs we have a local certification, even in the anonymous setting. 

\begin{restatable}{theorem}{ThmGrid}\label{thm:grids}
    In anonymous graphs, there exists a 1-local proof-labeling scheme of size $O(\log D)$ that accepts a labeled grid graph of diameter at most $D$ if and only if at most one vertex is selected.
\end{restatable}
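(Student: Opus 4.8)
The plan is to certify the position of every node relative to the (at most one) selected vertex. On a correct instance the prover fixes the selected vertex $s$ as the origin of a coordinate system --- or, if no vertex is selected, an arbitrary corner --- and writes in the certificate of each node $v$ the pair of signed integers $(x_v,y_v)$ giving the coordinates of $v$ in this system. On a grid of diameter at most $D$ every coordinate lies in $\{-D,\dots,D\}$, so these certificates use $O(\log D)$ bits. With radius $1$ a node $v$ sees its own label and certificate and the certificates of its neighbors; it accepts if and only if (i) whenever $v$ is selected, $(x_v,y_v)=(0,0)$; (ii) for every neighbor $u$ the difference $(x_u-x_v,\,y_u-y_v)$ belongs to $\{(1,0),(-1,0),(0,1),(0,-1)\}$; and (iii) the certificates of $v$ and of its neighbors are pairwise distinct. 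Note that (iii), taken over all nodes, amounts to requiring that no two vertices at distance at most $2$ in $G$ carry the same certificate.

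Completeness is immediate: the coordinate assignment above satisfies (i) by construction, (ii) because adjacent grid nodes differ by a unit vector, and (iii) because the (up to four) neighbors of a grid node are obtained by adding the four distinct unit vectors. For soundness I would establish the key claim: \emph{if $G$ is a grid and the certificates satisfy (ii) and (iii), then the map $c\colon v\mapsto(x_v,y_v)$ is injective.} Granting this, a grid that is accepted with two distinct selected vertices $s_1\neq s_2$ would have $c(s_1)=c(s_2)=(0,0)$ by (i), contradicting injectivity; hence every accepted grid has at most one selected vertex, which is exactly the statement.

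The crux, and the step I expect to demand the most work, is the injectivity claim, which I would prove using the rigidity of grids. On each $4$-cycle (unit square) of $G$, conditions (ii) and (iii) force $c$ to send the four vertices to the four corners of a unit square of $\mathbb{Z}^2$: any other configuration allowed by (ii) would make two of these four vertices --- which are at distance $2$ in $G$ --- receive equal certificates, contradicting (iii) at their common neighbor. Next, if two unit squares of $G$ share an edge $e$, then their images share the edge $c(e)$ and cannot coincide, since coinciding would again collide two distance-$2$ vertices; so the image of one lies strictly on the other side of $c(e)$, i.e. the induced map on the grid of unit squares never folds back. Going once around a degree-$4$ vertex then shows that $c$ is locally a translation composed with a rotation or reflection of $\mathbb{Z}^2$, and since the square complex obtained by filling in all unit squares of an $a\times b$ grid (with $a,b\ge 2$) is simply connected, a holonomy argument glues these local isometries into a single global isometry of $\mathbb{Z}^2$, which is injective on $V(G)$. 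The main delicate points are making this gluing rigorous and checking the boundary configurations ($2\times 2$ blocks, corners, and the two long sides of a $2\times n$ grid) directly once the unit-square picture is in place. The degenerate case of a $1\times n$ grid, which has no unit square and is not covered by this argument, is a path and hence a tree, where AMOS is certifiable with $O(1)$ bits (store the distance to the selected vertex modulo $3$); it is handled separately.
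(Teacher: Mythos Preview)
Your approach is correct but genuinely different from the paper's. The paper stores only the single integer $d(v,s)$ as certificate and proves soundness by induction on this distance: assuming all vertices at distance $<k$ from $s$ carry the correct label, a short case analysis using two grid-specific facts (neighbors' labels differ by exactly $1$; no vertex has three neighbors with smaller label) forces every vertex at distance $k$ to carry label $k$. You instead store full signed coordinates and reduce soundness to a rigidity lemma: any $\mathbb{Z}^2$-valued labeling of a grid satisfying (ii) and (iii) is the restriction of a global isometry of $\mathbb{Z}^2$, hence injective. Your verifier is cleaner (three rules versus six), but the injectivity proof requires the gluing you sketch; that gluing is sound and in fact does not need simple connectivity or holonomy --- once you have shown that each unit square maps to a unit square and that two edge-adjacent unit squares map to distinct squares on opposite sides of the shared image edge, the isometries realizing $c$ on adjacent unit squares must coincide (an isometry of $\mathbb{Z}^2$ is determined by an edge plus a side), and connectivity of the dual graph of unit squares in any $p\times q$ grid with $p,q\ge 2$ finishes it. The paper's distance-only scheme uses slightly less information and a more elementary induction; your coordinate scheme is more conceptual and would extend more readily to higher-dimensional grids or other rigid lattices.
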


Note that here we only need radius 1, proving that the 2 of the previous result is not required to work in structured anonymous graphs. Also note that the logarithm of the diameter of the grid is of order $\Theta(\log n)$ so we do not break the lower bound in that case.

As mentioned earlier, the right definition of a hole eludes our understanding. Let us formulate this as an open problem (leaving aside the dense case that we discuss later).

\begin{open}
    In which (sparse) graphs does AMOS require $\Omega(\log n)$ bits? Is there a good notion of ``hole'', such that ruling out this structure captures the cases where we can go below $\Theta(\log n)$? 
\end{open}

An intuitive definition for a hole would be that it is a cycle in a graph, that would act as an obstacle in a road network: the paths that connect two diametrically opposite nodes of cycle can be partitioned in two groups, the one that ``go north of the obstacle'' and the ones that ``go south of the obstacle''. 
To our surprise, we were unable to find such a notion in the literature. Note that in grids and chordal graphs, there is no such hole, while the construction of Göös-Suomela  does have such a structure.   

\paragraph{The impact of sparsity/density}

Let us now discuss one more aspect of the cycle topology that makes Göös-Suomela proof work: the cycles are sparse graphs. 
In particular, the places where we cut and plug different instances are ``thin'', 
in the sense that there are not too many nodes close to these cuts. 
If this were not the case, the amount of information available in the certificates around the cut would not be negligible compared to the number of instances considered, and the counting argument would not work. 

Hence the question: What happens if the graph is dense everywhere? Is the sparsity requirement an artifact of the proof, or does density help? We show that density does help.

\begin{restatable}{theorem}{ThmDenseEverywhere}
     \label{thm:dense-everywhere}
    In graphs where every node has degree $\alpha \log^2 n$, for a large enough constant $\alpha$, AMOS can be certified with $O(\log \log n)$ bits with verification radius 2. 
\end{restatable}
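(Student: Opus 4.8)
The plan is to exploit the fact that a graph with minimum degree $\alpha\log^2 n$ is dense enough that identifiers can be ``folded'' into a much smaller space while still remaining locally recoverable, and then to run the classical leader-certification scheme on these folded identifiers. Concretely, I would have the prover pick a hash function (equivalently, a coloring) $h$ from vertices to a range of size $\mathrm{poly}(\log n)$, with the property that $h$ is injective on every closed neighborhood $N[v]$ and, more strongly, that the pair $(h(v), h(u))$ determines the edge $uv$ within any $N[v]$; such an $h$ exists by a probabilistic/counting argument precisely because each neighborhood has size $\Theta(\log^2 n)$ (so a random coloring into $\Theta(\log^2 n)$ colors is injective on it with positive probability, and a union bound over the $n$ neighborhoods still succeeds since $n\cdot(\log^2 n)^2\cdot 2^{-\Omega(\log^2 n)}=o(1)$). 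Each vertex stores $h(v)$ in its certificate, which costs $O(\log\log n)$ bits. The verifier checks locally that $h$ is indeed injective on each $N[v]$ (this is what radius~2 buys us: $v$ can see $h$ on $N[v]$ and ask each neighbor to confirm consistency), so a cheating prover cannot supply a degenerate coloring.

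Next, on a correct (at-most-one-selected) instance, the prover builds a BFS tree rooted at the selected vertex (or, if no vertex is selected, the scheme will simply fail to close up) and writes into each vertex's certificate (i) its distance $d(v)$ to the root modulo something, or in full on $O(\log D)=O(\log n)$ bits --- wait, that is too expensive, so instead (ii) only the \emph{folded} identifier $h(p(v))$ of its parent, together with $d(v)$. The distance is the issue: $O(\log D)$ can be $\Theta(\log n)$. To avoid this, I would store the distance only modulo~$3$, as in the tree case mentioned in the introduction; this is enough to orient every tree edge consistently toward the root, at cost $O(1)$. Each non-root vertex then certifies: ``among my neighbors, exactly the one whose folded id is $h(p(v))$ and whose stored distance class is $d(v)-1 \bmod 3$ is my parent'', and the root certifies it is selected. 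Because $h$ is injective on $N[v]$, the folded parent pointer is unambiguous, so locally this behaves exactly like the identifier-based scheme.

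The correctness argument then mirrors Göös--Suomela's positive direction: if all vertices accept, the parent pointers define a functional graph on $V$ in which every non-root vertex points to a strictly-closer (mod~3, hence consistently oriented) neighbor, so following pointers must terminate, and it can only terminate at a vertex declaring itself the root, which must be selected; a second selected vertex would either be a second root (but the ``root id'' consistency check, also folded through $h$ and propagated, forbids this) or a non-root selected vertex (forbidden by the selection check). Here I need one more global gadget: to rule out \emph{two} disjoint rooted components each internally consistent, I would additionally have the prover write a single global folded root-identifier $r=h(\mathrm{root})$ into \emph{every} certificate and have each vertex check agreement with its neighbors --- but now I must ensure no two distinct true vertices collide on their $h$-value globally, which a $\mathrm{poly}(\log n)$ range does \emph{not} guarantee. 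The fix, and the main obstacle, is exactly this: I cannot use $h$ as a global unique name. Instead I would certify uniqueness of the root purely through the tree --- e.g., by also storing, along the tree path, a verified decreasing distance-to-root counter in \emph{full} only at ``checkpoint'' vertices spaced $\Theta(\log n/\log\log n)$ apart, amortizing the $\Theta(\log n)$ cost of one full distance over that many vertices so that the per-vertex cost stays $O(\log\log n)$; alternatively, partition the distance into $O(\log n/\log\log n)$ ``digits'' of $O(\log\log n)$ bits each and lay the digits out along the tree path so each vertex carries one digit plus the folded pointer. Getting this amortized distance-encoding to be \emph{locally verifiable} --- so that a cheating prover cannot create a long consistent-looking cycle of pointers --- is the delicate point, and I expect most of the proof's technical weight to sit there; the hashing/degree argument, by contrast, is a routine union bound enabled by the $\alpha\log^2 n$ degree hypothesis.
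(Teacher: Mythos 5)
Your proposal does not close the argument, and the step you yourself flag as ``the delicate point'' is in fact the whole difficulty. A spanning tree with parent pointers and distances reduced mod~3 certifies a unique sink only on trees; on a general graph a cheating prover can orient the pointers along a long cycle whose mod-3 labels decrease consistently all the way around, so nothing forces the pointer chains to terminate at a root, and nothing rules out two internally consistent rooted components. Neither of your proposed repairs obviously works: a global ``folded root name'' cannot certify root uniqueness because $h$ is not injective globally (as you note), and spreading the digits of the true distance along the tree path is not locally checkable, because the distances of a vertex and its parent differ by $1$ and incrementing an integer can change every digit (carry propagation), so a vertex holding a single digit cannot verify consistency with its parent's digit. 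There is also a flaw earlier in the construction: a uniformly random coloring with $\mathrm{poly}(\log n)$ colors is \emph{not} injective on a neighborhood of size $\Theta(\log^2 n)$ with high probability (birthday paradox: the per-pair collision probability is $1/\mathrm{poly}(\log n)$, not $2^{-\Omega(\log^2 n)}$, so your union bound over the roughly $n\log^4 n$ pairs does not converge). A greedy coloring of $G^2$ would repair this, but only under a degree \emph{upper} bound, which the hypothesis --- a degree lower bound --- is not meant to supply.

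The idea you are missing is that the degree hypothesis lets you dispense with the spanning tree entirely. The paper's scheme gives every vertex a pair $(i, S[i])$ where $S$ is the \emph{full, genuine} $O(\log n)$-bit identifier of the selected vertex and $i$ is a bit position, encoded on $O(\log\log n)$ bits. Since each neighborhood has $\alpha\log^2 n \gg \log n$ vertices, a simple probabilistic argument (assign positions uniformly at random; the probability that a fixed position is missing from a fixed neighborhood is $(1-1/(\beta\log n))^{\alpha\log^2 n}\le n^{-3}$, then union bound over vertices and positions) shows the prover can place the pieces so that every vertex sees every bit of $S$ among its neighbors. Each vertex reconstructs $S$, uses the second round to check that its neighbors reconstructed the same string, and checks that it is selected iff $S$ equals its own identifier. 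Global uniqueness of the leader then comes for free from the global uniqueness of the true identifiers --- exactly the property your folded $h$ cannot provide --- with no distances, pointers, or roots to certify.
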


Interestingly in the case of that theorem, even though the identifiers cannot be encoded in the certificates due to the lack of space, we do use them. The trick is that when we need to provide an identifier to a node, we can cut this identifier into many smaller pieces, and distribute the pieces to the neighbors of the node. The node can then collect these pieces from its neighbors' certificates. 

\paragraph{The role of the identifier range}

Our focus has been mostly to understand the scope of the Göös-Suomela technique in terms of topology. Another aspect of the technique to be questioned is the identifier range. Indeed, the construction of the instances leading to a contradiction requires the identifier range to be quadratic. Again, the question is: is this necessary, or is it an artifact of the proof? If the identifier range is of the form $[1, n+c]$ where $c$ is a constant, then the following theorem proves that once again the lower bound can be broken. 

\begin{restatable}{theorem}{ThmIDs}\label{thm:small_ids}
    Let $G$ be a graph with bounded diameter $D$. 
    If there exists a vertex whose identifier is of constant size relative to the size of $G$, then AMOS can be verified with certificates of size $O(\log D)$.
\end{restatable}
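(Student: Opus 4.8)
The plan is to use the promised constant-size identifier as a global \emph{anchor} that replaces, at cost $O(1)$ rather than $O(\log n)$, the identifier of the leader in the classical scheme, and then to certify ``at most one selected vertex'' by a counting argument along a rooted spanning tree in which every stored quantity is tiny.

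Concretely, I would fix a vertex $r$ whose identifier $a=\mathrm{id}(r)$ has constant size (one exists by hypothesis) and have the prover write $a$ into every certificate ($O(1)$ bits), together with, in the certificate of each vertex $v$, the value $d(v):=\mathrm{dist}_G(v,r)$ ($O(\log D)$ bits, since the diameter is at most $D$). The first step is to argue that these data are \emph{rigid}: with the usual distance checks --- a vertex's neighbours have $d$-values differing from its own by at most one, a vertex with $d(v)\ge 1$ has a neighbour with $d$-value $d(v)-1$, and a vertex with $d(v)=0$ carries identifier $a$ --- finiteness of the $d$-values together with uniqueness of identifiers force that there is a unique vertex with $d$-value $0$, that it is $r$, and that $d(v)=\mathrm{dist}_G(v,r)$ for every $v$ (in particular, if the announced value $a$ is not the identifier of any vertex, the vertex with $d$-value $0$, which necessarily exists, rejects).

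Next I would extract, \emph{with no further certificate bits}, a canonical spanning tree rooted at $r$, declaring the parent of $v\neq r$ to be the neighbour of smallest identifier among those whose $d$-value equals $d(v)-1$ (a nonempty set by the checks above); with verification radius $2$ a vertex can recompute its own parent from its neighbours, and its set of children by inspecting, for each neighbour one level further from $r$, that neighbour's neighbourhood, all of which lies within distance $2$. Finally, to certify AMOS, the prover also writes in the certificate of $v$ the number $c(v)$ of selected vertices in the subtree of $v$, capped at $2$, so that $c(v)\in\{0,1,2\}$ always and $c(v)\in\{0,1\}$ on a yes-instance --- only $O(1)$ bits; each vertex checks $c(v)=\min\{2,\chi(v)+\sum_{w\text{ child of }v}c(w)\}$, where $\chi(v)\in\{0,1\}$ indicates selection, and the root checks $c(r)\le 1$. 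On a yes-instance the honest assignment passes all checks and has total size $O(\log D)$; conversely, on any accepted assignment the rigidity of $d$ pins the canonical tree to the true one, the recurrence pins $c(v)$ to the true number of selected vertices in the subtree of $v$ capped at $2$, and then the root's test forces the total number of selected vertices to be at most $1$.

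The routine parts are the distance-certification bookkeeping and the induction identifying $c(v)$ with the (capped) subtree count. The step I expect to be the main obstacle is realising the canonical spanning tree as a genuinely locally checkable object: a vertex must verify, with radius $2$ and without storing any pointers in certificates, that the parent/child relation is the globally unique one determined by $d$ and the identifiers, so that a dishonest prover cannot smuggle in a spurious ``tree'' on which the counting argument would collapse.
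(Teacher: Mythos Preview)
Your proposal is correct and follows essentially the same approach as the paper: root a spanning tree at a vertex with constant-size identifier (the paper simply takes the vertex of minimum identifier), write that identifier together with the BFS distance to the root in every certificate, and aggregate a constant-size subtree count of selected vertices up to the root. The only differences are implementation details---you define the tree canonically via identifier tie-breaking and use verification radius~$2$ to recover the child set without storing any pointers, whereas the paper's write-up stays at radius~$1$ with a terser (and, as written, less careful) ``exactly one predecessor'' condition; your version is if anything the cleaner of the two.
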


\paragraph{What are the relevant properties beyond AMOS and unique leader?}

We have mentioned before that meta-theorems usually have two restrictions: one on the topology of the graph and one on the set of problems considered. Our focus is on the former, but the second is also interesting. We have focused on AMOS and unique leader for simplicity, but our questions are relevant for any property in the logarithmic regime. 

For example, Göös and Suomela consider two other problems: verifying that a set of pointers forms a spanning tree and verifying that the graph has an odd number of nodes. It is not hard to argue that spanning tree reduces to AMOS, using the notion of local reduction recently introduced in~\cite{EsperetZ25}, but it is unclear whether this could be achieved for the odd number of nodes. This prompts the following question.\footnote{Some variant of this question appears in~\cite{Feuilloley21} as Open Problem 11.} 

\begin{open}
    Can we characterize the properties that have complexity $\Theta(\log n)$, at least on cycles? Is there a property that is complete for local reduction?
\end{open}

In this direction, let us note that a recent paper proved that in anonymous cycles without input labels, there is a gap in the complexity between $\Theta(1)$ and $\Theta(\log n)$, in the sense that no property can have optimal certificate size strictly in between these regimes~\cite{BousquetFZ25}. 
This means that in that restricted setting, the logarithmic regime forms a complexity class that is well-separated from the lower complexities.

\section{Model, definitions and notations}

All graphs in this paper are undirected, connected, simple graphs, with inputs labels on the nodes, denoted by $G = (V, E, L)$ where $L \colon V \to \{0,1\}^*$ encodes the labeling.
We denote by $n$ the number of vertices and by $D$ the diameter of the graph.
Unless specified otherwise, the vertices are assigned \emph{unique identifiers} encoded on $O(\log n)$ bits. We use the word \emph{anonymous} when there is no identifiers.
Neighbors of a vertex $v$ are denoted as $N_G(v)$, and if $G$ is clear from the context, $N(v)$ is used. Distance between two vertices $u, v$ is denoted as $d_G(u, v)$, and the subscript is omitted if $G$ is clear from the context.
We denote the set of vertices within distance $r$ from $v$ as $V[v,r]$, also called the $r$-local neighborhood of $v$.

A \emph{graph property} is formally a set of graphs that is closed under isomorphism, that is, its membership does not depend on the choice of identifiers.
Note that it may depend on the labels of the graph.
A \emph{certificate assignment} $P$ for $G$ is a function $P\colon V(G) \rightarrow \{0,1\}^*$ that associates with each vertex a \emph{certificate}. 
We say that $P$ has size $s$ if $|P(v)| \leq s(n)$ for every $v$.
A \emph{verifier} is a function that takes as an input a graph $G$, its certificate assignment $P$ and $v \in V(G)$ and outputs either $0$ or $1$.

The induced subgraph $G[V[v,r]]$ is denoted as $G[v,r]$ and the restriction of $P$ to $V[v,r]$ is denoted as $P[v,r]$, that is $P[v,r] \colon V[v,r]\to \{0,1\}^*$. 
A verifier $\mathcal{A}$ is $r$-\emph{local} if $\mathcal{A}(G,P,v) = \mathcal{A}(G[v,r],P[v,r],v)$ for all $G$, $P$, and $v$. 

An $r$-local \emph{proof labeling scheme} certifying a property of labeled graphs $\mathcal{P}$ is a pair $(f,\mathcal{A})$,
where $\mathcal{A}$ is an $r$-local verifier and $f$, called the \emph{prover}, assigns to each $G \in \mathcal{P}$ a certificate assignment $P$ such that the following properties hold.
\begin{itemize}
    \item \emph{Completeness}: If $G \in \mathcal{P}$, then $\mathcal{A}(G[v, r], P[v, r], v) = 1$ for all $v$, where $P =f(G)$.
    \item \emph{Soundness}: If $G \notin \mathcal{P}$, then for every certificate assignment $P'$, there is $v$ such that \\ $\mathcal{A}(G[v, r], P'[v, r], v)~=~0$.
\end{itemize}
We say that $(f, \mathcal{A})$ has size $s : \mathbb{N} \to \mathbb{N}$ if $|f(G)(v)| \leq s(|V(G)|)$ for all $G \in \mathcal{P}$ and all $v \in V(G)$.

A \emph{chordal graph} is a graph wit no induced cycles of length 4 or more. 

The $k\times q$ square grid graph is the cartesian product of a path on $k$ vertices and a path on $q$ vertices. A \emph{grid graph} is one of these graphs with arbitrary $k$ and $q$.

\section{Logarithmic lower bound in constant diameter graphs}
\label{sec:lower-bound-topologies}
In this section we prove the following theorem:
\ThmDiameter*

The basic idea of the proof is the the same as that introduced by Göös and Suomela in their proof for the $\Omega(\log n)$ lower bound on certificate size for spanning trees and leader election~\cite{GoosS16} (Theorem 5.1).
They considered graph properties on cycles.
The problem is that cycles have large diameter relative to the number of vertices. 
The only change is that we will take a clique with a path extending from one of its vertices and ending in another one or a \emph{connection} of such graphs.
This way we can make the yes-instance any size necessary, without increasing the diameter.

We will connect two graphs, which will result in no-instance.
By \emph{connection} we mean taking two graphs, removing the edge in the middle of the path in each graph, and adding two edges to connect the corresponding endpoints of the two paths.

Then, we will assume that there is a proof labeling scheme $(f,\mathcal{A})$ of size $o(\log(n))$.
Finally, we will take the graph created by connecting yes-instances, retaining the identifiers and certificates of the original graphs. 
This way, the verifier will have to accept the combined graph even though it will be a no-instance.  

In the context of the AMOS problem, consider several graphs, each with one selected vertex.
By connecting two graphs, we obtain a graph with two selected vertices.
The verifier will also have to accept this connected graph.

\begin{proof}
    Consider a family of graphs $\mathcal{F}$, which includes graphs that have diameter at most $d$ and 
    consist of a cycle, where any vertex $v$ can be replaced with a clique $K$, such that the two edges incident to $v$ are connected to any two vertices of $K$.
    It is possible to have a constant number of bits of auxiliary information on vertices, such as labels or colors.
    Fix a radius $r$ and a diameter $d$. 
    We assume that $d$ is much larger than $r$ but at least $5r$.
    Let $\mathcal{P} \subseteq \mathcal{F}$ be a graph property such that every graph in $G\in\mathcal{P}$ has diameter at most $d/2$ and one selected vertex.
    Assume there is an $r$-local proof labeling scheme $(f,\mathcal{A})$ of size $o(log(n))$.

    We show that there are always yes-instances that can be connected to form an instance that inherits the certificates from the yes-instances.
    The verifier will accept each yes-instance, and therefore it will also accept the connected instance.

\begin{figure}[!h]
  \centering
  \includegraphics[width=0.7\textwidth]{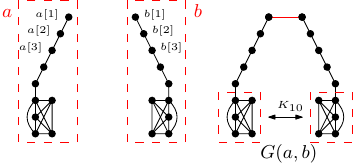}
  \caption{An example of a graph $G(a,b)$ consisting of $a$ and $b$.}
  \label{fig:yesinstance}
\end{figure}

    Similarly to the proof in~\cite{GoosS16}, we split $n = n_A + n_B$, where $n_A = \lfloor n/2\rfloor$ and $n_B = \lfloor n/2\rfloor$.
    Fix a partition of the set of identifiers $\{1,...,n^2\}$ into $2n$ subsets $a_1,...,a_n,b_1,...,b_n$ so that each $a \in A := \{a_1,...,a_n\}$ is of size $n_A$, and each $b \in B := \{b_1,...,b_n\}$ is of size $n_B$.
    Let $a[i]$ (resp., $b[i]$) denote the $i$-th identifier in $a$ (resp., $b$) in the natural order.

    We define a family of yes-instances $G(a,b)$ indexed by pairs $(a,b) \in A \times B$.
    The graph $G(a,b)$ consists of:
    \begin{itemize}
        \item a path $a[1],a[2],\dots,a[2r+1],\dots,a[k]$,
        \item a path $b[1],b[2],\dots,b[2r+1],\dots,b[\ell]$,     
        \item a clique consisting of the vertices $a[k+1],\dots,a[n_A]$ and $b[\ell+1],\dots,b[n_B]$, 
        \item and the edges $\{a[1],b[1]\},\{a[k],a[k+1]\},\{b[\ell],b[\ell+1]\}$.
    \end{itemize}
    For an example, see \Cref{fig:yesinstance}.
    
    Recall that we assume that $\mathcal{P}$ requires only an $o(\log(n))$-bits certificate to be verified.
    Let $c(a,b)$ denote a color of $G(a,b)$, consisting of all the labels and certificates within distance $2r+1$ of the vertices $a[1]$ and $b[1]$.
    Formally $c(a,b)$ is defined as follows:

\[
c(a,b)=\left(\begin{aligned}
& f(G(a,b))(a[2r+1]),\ L(G(a,b))(a[2r+1]),\ \dots,\ f(G(a,b))(a[1]),\ L(G(a,b))(a[1]),\\
& f(G(a,b))(b[1]),\ L(G(a,b))(b[1]),\ \dots,\ f(G(a,b))(b[2r+1]), \ L(G(a,b))(b[2r+1])
\end{aligned}\right)
\]

It holds that the number of bits of $c(a,b)$ is $o(r\log(n))$.

Let $K_{n,n} = (A \cup B,E)$ be a complete bipartite graph with $E =\{\{a,b\} : a \in A,b \in B\}$. 
We define an edge coloring of $K_{n,n}$ as follows: 
the color of the edge $\{a,b\} \in E$ is $c(a,b)$.
For sufficiently large $n$, the number of bits in $c(a,b)$ is smaller than $\log(n)/3$.
Therefore, the number of distinct colors in $K_{n,n}$ is therefore smaller than $\sqrt[3]{n}$. 
Hence there is a subset of edges $H \subseteq E$ such that $|H| > |E|/\sqrt[3]{n} = n^{5/3}$ and all edges of $H$ have the same color.
In the original proof, the authors apply the result due to Bondy and Simonovits \cite{BONDY197497} to show that in $H$ there is a $2k$-cycle where all the edges have the same color. 
We only need a weaker result, namely \emph{Kővári–Sós–Turán theorem} \cite{Kovari1954}, from which it follows that for a sufficiently large $n$, the subgraph $(A\cup B,H)$ necessarily contains a $4$-cycle. 
Let the vertices of this cycle be $a_i,b_i,a_j,b_j$ in this order, where $a_i,a_j \in A$ and $b_i,b_j \in B$. 
As all edges of the cycle have the same color, we have $c(a_i,b_i) = c(a_j,b_j) = c(a_i,b_j) = c(a_j,b_i)$.

\begin{figure}[!h]
  \centering
  \includegraphics[width=0.6\textwidth]{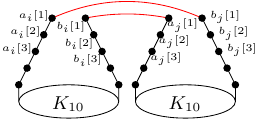}
  \caption{An example of a graph created by connecting two yes-instances $G(a_i,b_i)$ and $G(a_j,b_j)$. }
  \label{fig:noinstance}
\end{figure}

Now we connect the instances $G(a_i,b_i)$ and $G(a_j,b_j)$ and construct a no-instance.
See \Cref{fig:noinstance} for an example.

That is, we take the vertex-disjoint graphs $G(a_i,b_i)$ and $G(a_j,b_j)$, remove the edges $\{a_i[1],b_i[1]\}$ and $\{a_j[1],b_j[1]\}$, and add $\{b_{i}[1],a_j[1]\}$ and $\{b_j[1],a_i[1]\}$.
For each vertex $v$, we keep the auxiliary information $L(v)$ and the
certificates $P(v)$ from the graphs $G(a_i,b_i)$ and $G(a_j,b_j)$.
Note that since we connect instances with diameter at most $d/2$ the diameter of the resulting instance is at most $d$ and thus it still belongs to $\mathcal{F}$. 

It remains to argue that the computation of $\mathcal{A}$ on the connected instance $G$, with the labels $L$ and the certificate assignment $P$, is accepting.
To see this, pick a vertex $v$.
Then there exists $i$ such that $v \in V(G(a_i
,b_i))$. 
If $v$ is far from $a_i[1]$ and $b_i[1]$, then the local neighborhood of $v$ looks identical in $G$ and $G(a_i,b_i)$. 
Because $G(a_i,b_i)$ is a yes-instance, $\mathcal{A}(v)$ accepts.
If $v$ is close to $b_i[1]$, then the local neighborhood of $v$ looks identical in $G$ and $G(a_{i+1},b_i)$, which is another yes-instance. Similarly, if $v$ is near $a_i[1]$, then its local neighborhood looks identical in $G$ and $G(a_i,b_{i-1})$, which is also a yes-instance.
In all cases, $\mathcal{A}(v)$ accepts.
Thus, the connected graph is accepted by all vertices.
If $G \not\in \mathcal{P}$, this have a contradiction.
We can therefore conclude that the graph property $\mathcal{P}$ does not admit proof labeling scheme of size $o(\log n)$.

For the AMOS problem, place one selected vertex in each yes-instance.
When these yes-instances are combined, the resulting graph contains more than one selected vertex.
Thus, the graph with more than one selected vertex is accepted.

\end{proof}

\begin{observation} 
Certifying AMOS in bipartite graphs, and thus also in perfect graphs, with bounded diameter, requires certificates of size $\Omega(\log n)$.
\end{observation}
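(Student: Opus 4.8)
The plan is to reuse the construction from the proof of Theorem~\ref{thm:diameter} almost verbatim, modifying only the ``dense part'' of each instance and fixing parities so that every graph produced is bipartite; since every bipartite graph is perfect, the statement for perfect graphs then follows immediately from the bipartite case.

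First I would redefine the family $\mathcal{F}$ so that the underlying cyclic structure is an \emph{even} cycle and the vertex that gets blown up is replaced not by a clique but by a complete bipartite graph. Concretely, in the yes-instance $G(a,b)$ the clique on $\{a[k+1],\dots,a[n_A]\}\cup\{b[\ell+1],\dots,b[n_B]\}$ is replaced by a complete bipartite graph obtained by partitioning these $\approx n$ vertices into two halves; this part still has diameter $2$ and can be made arbitrarily large, so the diameter of $G(a,b)$ stays $O(r)$, hence at most $d/2$ once $d$ is chosen large enough, exactly as before. The rest of the argument is unchanged: colour the edge $\{a,b\}$ of $K_{n,n}$ by the certificates and labels within distance $2r+1$ of $a[1]$ and $b[1]$, note that an $o(\log n)$ scheme produces fewer than $n^{1/3}$ colours, extract a monochromatic $4$-cycle $a_i,b_i,a_j,b_j$ via the Kővári–Sós–Turán theorem, and connect $G(a_i,b_i)$ and $G(a_j,b_j)$ by deleting $\{a_i[1],b_i[1]\},\{a_j[1],b_j[1]\}$ and adding $\{b_i[1],a_j[1]\},\{b_j[1],a_i[1]\}$, keeping all labels and certificates. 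As in the proof of Theorem~\ref{thm:diameter}, the resulting graph has diameter $O(d)$, every vertex sees a local view identical to that in one of the yes-instances and therefore accepts, and placing a single selected vertex (say, in the dense part) in each yes-instance makes the connected graph an AMOS no-instance, a contradiction.

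The one genuinely new point — and the step I expect to need the most care — is to check that bipartiteness survives both the blow-up and the connection. For this I would fix, once and for all, a proper $2$-colouring of every $G(a,b)$ in which $a[1]$ receives colour $0$ and $b[1]$ receives colour $1$ (these vertices are adjacent, so their colours must differ). Such a colouring forces the colours along the two paths $a[1],a[2],\dots$ and $b[1],b[2],\dots$, and I would pick the path lengths $k,\ell$ together with the vertices $a[k+1],b[\ell+1]$ of the complete bipartite part to which the paths attach, on whichever side is needed, so that the colouring of that part is consistent; this is possible because a complete bipartite graph has a proper $2$-colouring and the choice of attachment vertex on each side is unconstrained apart from its colour. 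With $a[1]$ always coloured $0$ and $b[1]$ always coloured $1$ throughout the family, the edges deleted during the connection are incident only to these vertices, and the edges added, $\{b_i[1],a_j[1]\}$ and $\{b_j[1],a_i[1]\}$, always join a colour-$1$ vertex to a colour-$0$ vertex, so the merged graph inherits a proper $2$-colouring and is bipartite. The diameter estimate and the ``local view looks like a yes-instance'' argument are then identical to those in Theorem~\ref{thm:diameter}, yielding the $\Omega(\log n)$ lower bound for AMOS on bipartite bounded-diameter graphs; and since bipartite graphs are perfect, the same bound holds for perfect graphs of bounded diameter.
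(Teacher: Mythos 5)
Your proof is correct, but it takes a genuinely different route from the paper. The paper's argument is a one-liner: subdivide every edge of the lower-bound instances from Theorem~\ref{thm:diameter} once. This makes every cycle even (hence the graph bipartite), at most doubles the diameter (plus a constant), and leaves the cut-and-recombine counting argument intact; perfection then follows since bipartite graphs are perfect. You instead rebuild the construction from scratch: you replace the clique by a complete bipartite graph, adjust the parities of the path lengths and the attachment points, and verify explicitly that a consistent proper $2$-colouring (with $a[1]$ always colour $0$ and $b[1]$ always colour $1$) survives the connection operation. Both arguments are sound. The subdivision trick is shorter and requires no re-examination of the construction, at the cost of inflating the vertex count quadratically (subdividing the clique introduces $\Theta(n^2)$ new vertices), which is harmless here since $o(\log N)$ for the new size $N=\Theta(n^2)$ is still $o(\log n)$. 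Your version keeps the vertex count linear and makes the bipartite structure explicit, at the cost of the parity bookkeeping; your care in checking that the two added edges $\{b_i[1],a_j[1]\}$ and $\{b_j[1],a_i[1]\}$ respect the colouring is exactly the step that the subdivision trick lets one skip. Either way the conclusion for perfect graphs is immediate.
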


\begin{proof}
    We can achieve the same result for bipartite graphs by subdividing each edge once, that is, by inserting a vertex into each edge.
    Since all bipartite graphs are perfect, the lower bound also applies to perfect graphs.
\end{proof}

\begin{observation}
Certifying AMOS in graphs with bounded neighborhood diversity requires certificates of size $\Omega(\log n)$.
\end{observation}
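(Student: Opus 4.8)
The plan is not to repeat the argument of \Cref{thm:diameter} but to observe that the construction used there never leaves the class of graphs of bounded neighborhood diversity, so the same contradiction already establishes the bound for that restricted class. Recall that a graph $G$ has neighborhood diversity at most $k$ if $V(G)$ can be partitioned into at most $k$ classes such that any two vertices of the same class are twins (equal neighborhoods outside the pair), whence each class is a clique or an independent set with a common external neighborhood. Note that this observation does \emph{not} follow formally from \Cref{thm:diameter}: bounded diameter does not imply bounded neighborhood diversity (a windmill graph has diameter $2$ and unbounded neighborhood diversity), so one genuinely has to re-inspect the instances.

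First I would bound the neighborhood diversity of the yes-instances $G(a,b)$ built in the proof of \Cref{thm:diameter}. Such a graph is a path on the roughly $4r$ vertices $a[k],\dots,a[1],b[1],\dots,b[\ell]$ (with $k=\ell=2r+1$), together with a clique on $a[k{+}1],\dots,a[n_A],b[\ell{+}1],\dots,b[n_B]$ attached to the two path-ends through $a[k{+}1]$ and $b[\ell{+}1]$. All clique vertices except these two attachment vertices are pairwise true twins and form a single class; $a[k{+}1]$ and $b[\ell{+}1]$ form their own classes; and the $O(r)$ internal path vertices contribute at most $O(r)$ further singleton classes. Hence $\mathrm{nd}(G(a,b)) = O(r)$, and since $r$ is the fixed (constant) locality radius of the assumed scheme, this is $O(1)$. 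The $O(1)$ bits of auxiliary information allowed on vertices — labels, in particular the selected marker — are irrelevant, since neighborhood diversity is a purely structural parameter.

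Second I would check that the no-instance obtained by connecting $G(a_i,b_i)$ and $G(a_j,b_j)$ (deleting $\{a_i[1],b_i[1]\},\{a_j[1],b_j[1]\}$ and adding $\{b_i[1],a_j[1]\},\{b_j[1],a_i[1]\}$) still has neighborhood diversity $O(r)$: the operation re-wires only four edges, all incident to path vertices, leaves the two cliques intact, and merely glues the two $O(r)$-vertex paths into one longer $O(r)$-vertex path whose ends hang off the two cliques; it therefore changes the classes of only $O(r)$ vertices. Consequently the whole chain of reasoning in the proof of \Cref{thm:diameter} — fixing the identifier partition, coloring $K_{n,n}$, extracting a monochromatic $4$-cycle via the Kővári–Sós–Turán argument, and gluing two yes-instances — takes place entirely inside the class of graphs of neighborhood diversity $O(r)$, so assuming an $r$-local proof-labeling scheme of size $o(\log n)$ for AMOS on that class yields exactly the same contradiction. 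The only delicate point, and the closest thing to an obstacle, is precisely this bookkeeping: verifying that the connection operation and the arbitrary auxiliary information do not blow up the neighborhood diversity and that the resulting bound is a function of $r$ alone, independent of $n$. Once that is in place the statement follows, and in fact yields the sharper claim that AMOS requires $\Omega(\log n)$ bits already on graphs whose neighborhood diversity is bounded by a universal constant times the locality radius.
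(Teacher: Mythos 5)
Your proof is correct and takes essentially the same approach as the paper: one observes that the instances built in the proof of Theorem~\ref{thm:diameter} already have bounded neighborhood diversity, with the clique contributing essentially one twin class and the constant-length path contributing constantly many singleton classes. You are in fact slightly more careful than the paper's one-line argument, since you also verify that the glued no-instance remains in the class, which is indeed needed for the soundness part of the contradiction.
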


\begin{proof}
Our yes-instances also have small neighborhood diversity, since each vertex on the path can form its own set, and the clique forms a single set. 
Therefore, the lower bound $\Omega(\log n)$ also holds for the smaller class of graphs with bounded neighborhood diversity.
\end{proof}

\section{Log-diameter certification in anonymous chordal graphs}

Let $\mathcal{F}$ be the family of all labeled chordal graphs with diameter at most $D$, and let $\mathcal{P}\subseteq\mathcal{F}$ denote the set of graphs satisfying the AMOS property, meaning that each $G\in\mathcal{P}$ contains at most one selected vertex.In this section we will prove the following theorem:

\ThmChordal*

Let us first describe the scheme, namely the prover strategy on yes-instances and the verification at the vertices.
    
    Let $G$ be a yes-instance and $P = f(G)$ be the certificate assignment for $G$. 
    Let $s$ be the selected vertex. 
    If no vertex is selected, then $s$ is chosen arbitrarily. 
    The certificate of $s$ is $P(s) = 0$.
     
    For any other $v\in V$, the certificate assignment is $P(v) = d(v,s)$. 

    Verification on a vertex $v$ consists of the following steps:
    \begin{itemize}
        \item If $v$ is selected and $P(v) \neq 0$, the verifier $\mathcal{A}(v)$ rejects. 
        \item If $P(v) = 0$ and there exists $y \in N(v)$ such that $P(y) \neq 1$, the verifier $\mathcal{A}(v)$ rejects.
        \item If $P(v)>0$, then:

        \begin{enumerate}
            \item[R1:] If for all $u\in N(v)$ we have $P(u)\geq P(v)$, or if there exists $u\in N(v)$ such that $|P(u) - P(v)| >~1$, the verifier $\mathcal{A}(v)$ rejects.
            \item[R2:] If there exist vertices $u_1,u_2 \in N(v)$ such that $P(u_1) = P(u_2) = P(v) -1$ 
            and there is no edge $\{u_1,u_2\}$ then the verifier $\mathcal{A}(v)$ rejects.
            \item[R3:] If there exists a vertex $u \in N(v)$ with $P(u) = P(v) -1$ and a vertex $y\in N(v)$ with $P(y) = P(v)$, then for each such $y$, if
            \begin{enumerate}
                \item $u\notin N(y)$, and
                \item there exists $z\in N(y)$ such that $P(z) = P(v)-1$ and $z\notin N(v)$
            \end{enumerate} 
            the verifier $\mathcal{A}(v)$ rejects.
        \end{enumerate}
        \item Otherwise, the verifier $\mathcal{A}(v)$ accepts.
    \end{itemize}

If the condition in R2 is not satisfied, the input graph is either not chordal, as there exists a cycle $v,u_1, \dots, u_2$ of length greater than 3 or there are two vertices $s_1,s_2$.
See \Cref{fig:chordalR2} for a visualization of this case.

For the case of R3, see \Cref{fig:chordalR3}.

\begin{figure}[!h]
  \centering
  \includegraphics[width=0.6\textwidth]{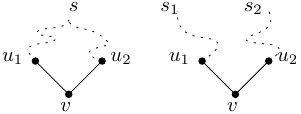}
  \caption{Visualization of a case that R2 prevents.}
  \label{fig:chordalR2}
\end{figure}

\begin{figure}[!h]
  \centering
  \includegraphics[width=0.6\textwidth]{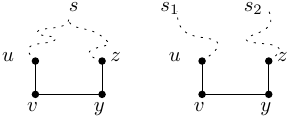}
  \caption{Visualization of a case that R3 prevents. Note that even if there would be an edge between $u$ and $z$ the cycle would still be too long.}
  \label{fig:chordalR3}
\end{figure}


     


    \begin{proof}[Proof of Theorem~\ref{thm:chordal}]
     First, we show that if a graph is accepted, it must contain at most one selected vertex.
     \newline
        Proof of $ \Rightarrow: $ 
        
        We can assume that there is a vertex $s$ with $P(s) = 0$. If not, by R1 the verifier would reject the vertex with the smallest certificate, as it would have no neighbor with a smaller certificate.
        
        The proof proceeds by induction on the distance from the vertex~$s$. 
        We assume that for any vertex~$v$ with $d(v,s) < k$, we have $P(v) = d(v,s)$.

       Base case: Every vertex $u \in G[s,r]$ can verify that its certificate matches its actual distance from~$s$.
        
        Induction step:         
        We show that, under the induction hypothesis, for any vertex $v$ with $d(v,s) = k$, either $P(v) = d(v,s)$ or the verifier rejects at some vertex.
        Consider a vertex $v$ with $d(v,s) = k$ and any neighbor $u \in N(v)$ with $d(u,s) = k-1$.
         
        Let $w \in N(u)$ be the neighbor of $u$ such that $d(w,s) =k-2$.
            
        \begin{itemize}
            \item If $P(v) < k-2$ or $P(v) > k$, the verifier $\mathcal{A}(v)$ rejects according to R1.
            
            \item If $P(v) = k-2$, recall that vertex $u$ sees its neighbor $w$ with $P(w) = k-2$.
            However, there cannot be an edge ${v,u}$; if such an edge existed, it would imply that $d(v,s) \leq k-1$, which is not the case. 
            Therefore, the verifier $\mathcal{A}(u)$ rejects according to R2.
            
            \item Now suppose that $P(v) = k-1$. 
             Vertex $v$ must have a neighbor $y$ with $P(y) = k-2$; otherwise, the verifier $\mathcal{A}(v)$ would reject according to R1.
            However, $y$ cannot have $d(y,s) = k-2$, as that would imply $d(v,s) \ne k$. 
            This means that $P(y) \ne d(y,s)$ and $y$ is attempting to \emph{deceive} the verifier. 
            Since $y \in N(v)$ and $d(v,s) = k$, there are only three possible values for $d(y,s)$: 
            \begin{enumerate}
                \item If $d(y,s) = k-1$, we reach a contradiction with the induction hypothesis, which states that the certificates are correct for all vertices at distance less than $k$ from $s$. 
                
                \item If $d(y,s) = k$, then $y$ cannot be adjacent to any vertex at distance $k-2$, as that would imply $d(y,s) \leq k-1$. 
                In particular, $y$ cannot be adjacent to $w$.
                Now, if there is an edge between $u$ and $y$, the verifier $\mathcal{A}(u)$ rejects according to R2, because $P(w) = P(y) = k-2$, $w, y \in N(u)$, but $\{w,y\} \notin E$.
                If there is no edge $\{u,y\}$, then both $\mathcal{A}(u)$ and $\mathcal{A}(v)$ reject due to R3, reason being that the vertex $v$ cannot be adjacent to $w$ and $y$ is not adjacent to~$u$.
                
                \item If $d(y,s) = k+1$, then $u$ cannot be adjacent to $y$, as that would imply $d(y,s) \leq k$.
                Without the edge $\{u,y\}$, the verifier $\mathcal{A}(v)$ rejects according to R2.

            \end{enumerate} 
            All possibilities for the distance of such a \emph{deceiving} vertex $y$ have been exhausted, each leading to a contradiction.
            Therefore, no such vertex $y$ exists, and consequently no such vertex $v$ exists either.
            
            \item The only remaining option is that $P(v) = k = d(v,s)$, which is the correct certificate.
        
        \end{itemize}
        For the converse implication, we show that any yes-instance is always accepted. 
        \newline
        Proof of $\Leftarrow:$

        If there is exactly one selected vertex $s$, then for every vertex $v$, the certificate is $P(v) = d(v,s)$, and it holds that $P(s) = 0$.
        If there is no selected vertex in $G$, then any vertex can serve as $s$, and the certificate $P(s) = 0$ is assigned accordingly. 

        R1 is always satisfied, since the certificate corresponds to the distance from $s$.
        
        R2 is satisfied because, if two vertices $x, y$ with $d(x,s) = d(y,s)$ have a common neighbor $v$ with $d(v,s) = d(x,s)+1$, there must be an edge $\{x,y\}$.
        If such an edge does not exist, then the vertices $x, v, y$ would form part of a cycle of length at least 4, as there exist paths from both $x$ and $y$ to $s$, which contradicts the assumption that the graph is chordal.
        
        R3 is satisfied for a reason similar to that of R2.
        Consider two vertices $v, y$ with $d(v,s) = d(y,s)$, and two other vertices $u \in N(v)$ and $z \in N(y)$, such that $d(u,s) = d(z,s) = d(v,s) + 1$.
        
        If neither of the edges $\{v,z\}$ or $\{y,u\}$ exists, then the vertices $u, v, y, z$ would form part of a cycle of length at least 4, similarly to the case of R2.
        Hence, the graph would not be chordal.

       Therefore, all rules are satisfied, and the graph is accepted.
    
       Since only the distances are stored, the certificates have size $O(\log(D))$.
       The verifier only checks vertices up to distance 2, so the it is 2-local.

         We have thus shown that the 2-local proof-labeling scheme $(f,\mathcal{A})$ of size $O(\log(D))$ accepts a chordal graph if and only if it contains at most one selected vertex.
    \end{proof}

\section{Log-diameter certification in anonymous grid graphs}

Let $\mathcal{F}$ be the family of all labeled grid graphs with diameter at most $D$, and let $\mathcal{P}\subseteq\mathcal{F}$ denote the set of graphs satisfying the AMOS property, meaning that each $G\in\mathcal{P}$ contains at most one selected vertex.

In this section, we prove the following:

\ThmGrid*

Again, let us first describe the prover strategy on yes-instances and then the verification at the vertices.

Let $s$ be the selected vertex and $P(s) = 0$ the certificate of $s$.
If no vertex is selected, an arbitrary vertex is chosen as $s$.
For any other vertex $v$ the certificate is $P(v)=d(v,s)$.

Verification on vertex $v$: 
 
\begin{enumerate}
    \item If $v$ is selected and $P(v)\neq 0$, then $\mathcal{A}(v)$ rejects.
    \item If $P(v) = 0$ and there is a vertex $y\in N(v)$ such that $P(y) \neq 1$, then $\mathcal{A}(v)$ rejects.  
    \item If for any $v$ there are two or more vertices $w \in G[v,r]$ such that $P(w) = 0$, then $\mathcal{A}(v)$ rejects.
    \item If for any $v$ the certificate $P(v) = k$ and there is a vertex $y\in N(v)$ such that $P(y) \notin \{k-1, k+1\}$, then $\mathcal{A}(v)$ rejects.
    \item If for any $v$ the certificate $P(v) = k>0$ and there is no $y\in N(v)$ such that $P(y) = k-1$, then $\mathcal{A}(v)$ rejects.
    \item If for any $v$ there are more than two vertices $y\in N(v)$ such that $P(y) = k-1$, then $\mathcal{A}(v)$ rejects. 
    \item Otherwise $\mathcal{A}(v)$ accepts.
    

\end{enumerate}

\begin{proof}
The proof of the left to right implication proceeds by induction on the distance from vertex $s$.
In the case there are several vertices $w$ with $P(w) = 0$, one of them is selected as $s$.

We can assume that there is such a vertex $s$ with $P(s) = 0$, otherwise, on the vertex $w$ with the smallest $P(w)$, the verifier $\mathcal{A}(w)$ would reject according to the condition 5, as there would be no vertex in $N(w)$ with a smaller certificate.

We assume that for any vertex $v$ with $d(v,s)<k$, we have $P(v) = d(v,s)$.

    Base case:
    If $s$ is a selected vertex, the certificate $P(s) = 0$; otherwise, $\mathcal{A}(s)$ would reject according to condition 1, and every $v \in N(s)$ has $P(v) = 1$ according to condition 2.

    If there are multiple vertices $w$ with $P(w) = 0$ in the local neighborhood of a vertex, the verifier rejects at that vertex.

    Induction step:
    We show that under the induction hypothesis for a vertex $v$ with $d(v,s) = k$ either $P(v) = d(v,s) = k$, or there is a vertex $y$ such that $\mathcal{A}(y)$ rejects.

    \begin{figure}[!h]
  \centering
  \includegraphics[width=0.55\textwidth]{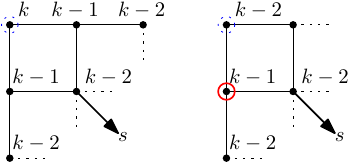}
  \caption{Visualization of the the case, where $v$ has both coordinates different from $s$. The vertex in the dotted circle is $v$. The vertex in the red circle is the vertex, where the verifier rejects.}
  \label{fig:grid1}
\end{figure}
    
    There are two cases, which will be solved separately:
    \begin{enumerate}
        \item Assume that vertex $v$ has both coordinates different from those of $s$
        In this case $v$ has two neighbors $y_1,y_2\in N(v)$ such that $P(y_1) = P(y_2) = k-1$. 
        The certificate $P(v)$ can be equal to $k$, in which case it is the correct distance, or $P(v) = k-2$. 
        In any other case, the vertices $y_1, y_2$ would reject due to condition 4.
        
        Now assume that $P(v) = k-2$.
        Since $v$ differs in both coordinates, either $y_1$ or $y_2$ has to differ in both coordinates too. 
        If not, they would be neighboring to $s$ and they would reject due to condition $3$ as $k-1 = 0$. 
        Without loss of generality, let $y_1$ be the vertex that differs in both coordinates.
        
        The vertex $y_1$ has to have two neighbors at distance $k-2$ from $s$ equal to $k-2$ and $v$ would be a third neighbor with $P(v) = k-2$ and thus $\mathcal{A}(y_1)$ would reject due to condition $6$.

        See \Cref{fig:grid1} for a visualization of this case. 
        
        Thus the only option is that $P(v) = k = d(v,s)$, which is the correct certificate.
        
        \item Now assume that vertex $v$ differs from $s$ in only one coordinate in the grid.
        Let $y\in N(v)$ be the vertex with $d(y,s) = P(y) = k-1$.

        The certificate $P(v)$ can, again, be either $k$ or $k-2$ otherwise $\mathcal{A}(y)$ would reject.
        If $P(v) = k$ the certificate is correct, so let us assume that $P(v) = k-2$ and vertex $v$ is trying to deceive the verifier.

        Let $v_N, v_E, v_S, v_W$ denote the neighbors of $y$, and let $v = v_N$ and $P(v_S) = k-2$. 
        
        Let us emphasize that the vertices $v_S, v_E, v_W$ must have a correct certificate $P(v_i) = d(v_i,s)$ for $i\in \{S,E,W\}$. 
        For the vertex $v_S$, this holds from the induction hypothesis since $d(s,v_S) < k$, and $v_E,v_W$ can have the certificate equal to either $k-2$ or $k$.
        However, if $P(v_E) = k-2$ or $P(v_W) = k-2$ there would be $3$ neighbors of $y$ with certificate $k-2$ and thus $\mathcal{A}(y)$ would reject. 
        The only possibility is that $v_S, v_E, v_W$ have the correct certificate.

        The vertices $v$ and $v_E$ have a common neighbor $v_C$. 
        As $P(v) = k-2$ and $P(v_E) = k$ the only possible certificate is $P(v_C) = k-1$. 
        But then $v_E$ has three neighbors whose certificates are equal to $k-1$ and $\mathcal{A}(v_E)$ rejects.  

    See \Cref{fig:grid2} for a visualization of this case.

        Thus we showed that the only possibility is $P(v) = k$ which is the correct certificate.
                
    \end{enumerate}

    And thus we have shown that if our proof labeling scheme accepts, the AMOS property holds.
     For the converse implication, see \Cref{app:proof_grids}.    

\end{proof}

\begin{figure}[!h]
  \centering
  \includegraphics[width=0.6\textwidth]{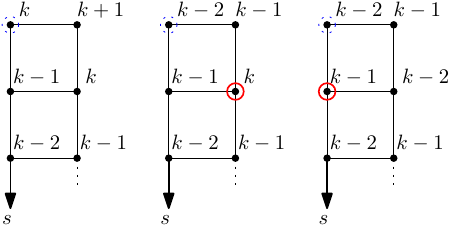}
  \caption{Visualisation of the the case, where $v$ has one coordinate different from $s$. The vertex in the dotted circle is $v$. The vertex in the red circle is the vertex, where the verifier rejects.}
  \label{fig:grid2}
\end{figure}

\section{Sublogarithmic upper bounds in everywhere-dense graphs}
\label{sec:everywhere-dense}

Up to this point, we have explored two aspects of the topology of the lower bound: the diameter and the presence of a hole. We now turn our attention to the density of the graph. For the Göös-Suomela technique to work, it is essential to be able to cut \emph{yes}-instances into pieces and combine these pieces without too many vertices noticing the difference. If this cannot be done, then the counting argument fails. We prove that if the graph is dense everywhere, in the crude sense that the degree is relatively high for all vertices, then the lower bound cannot work, since we have a sublogarithmic upper bound. 

\ThmDenseEverywhere*

Due to the lack of space, the proof of this theorem is deferred to Appendix~\ref{app:density}. Let us briefly sketch the technique. For AMOS, we would like to give the identifier of the selected vertex (if it exists) to all vertices, but for this we would need $\Theta(\log n)$ bits. The idea is to cut this identifier into small pieces and distribute them to the vertices. Then a vertex will use one round of communication to gather all the pieces and recover the leader identifier, and one round to check that it has been given the same as its neighbors, and that if it is selected then this is its own identifier. Arguing that there exists a good distribution of small enough pieces can be done via the probabilistic method, in a similar way to \cite{FeuilloleyFHPP21, Cook0M25}.

\section{Sublogarithmic upper bound with small identifiers}
\label{sec:small-ID}
In this section, we prove the following theorem.

\ThmIDs*

The idea of the proof is to choose the vertex with the smallest identifier as the root of a spanning tree, and then accumulate the number of selected vertices from the leaves up to the root.

For the full proof, see \Cref{app:proof_ids}.

\begin{corollary}
If the identifiers are in $[1+k, n+h]$ for some constant integers $h$ and $k$, with $k \leq h$, then AMOS on a graph with a bounded diameter $D$ can be done with certificates of size $O(\log D)$.
\end{corollary}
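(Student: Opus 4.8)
The plan is to reduce this to Theorem~\ref{thm:small_ids} by exhibiting a single vertex whose identifier is of constant size relative to $|V(G)|$. We are told the identifiers live in the range $[1+k, n+h]$ where $h$ and $k$ are constants with $k \le h$; since the $n$ identifiers are distinct integers inside an interval of length $n + (h-k)$, a counting argument (pigeonhole) shows at most $h-k+1$ of the integers in $\{1+k, 2+k, \dots, 1+k+(h-k)\} = \{1+k, \dots, 1+h\}$ can fail to be used, so in particular at least one of the $O(1)$ smallest values in the range, say the one realized as the minimum identifier $m$, satisfies $m \le 1+h$, hence $m$ is a constant independent of $n$. That minimum-identifier vertex therefore has an identifier of size $O(1)$, which is constant relative to the size of $G$.

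First I would state precisely the pigeonhole step: among the $h-k+1$ integers $1+k, 2+k, \dots, 1+h$, the set of identifiers (which has size $n$) can miss at most $h-k$ of the whole range $[1+k,n+h]$, so it misses at most $h-k$ of those $h-k+1$ small values; thus at least one small value is an actual identifier, and consequently $\min_{v} \mathrm{id}(v) \le 1+h = O(1)$. Then I would invoke Theorem~\ref{thm:small_ids}: $G$ has bounded diameter $D$ and contains a vertex (the one attaining the minimum identifier) whose identifier is of constant size relative to $|V(G)|$, so AMOS on $G$ can be verified with certificates of size $O(\log D)$. That is exactly the claim.

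The only point that needs a little care — and the mild ``obstacle'' — is making the dependence on $n$ explicit: ``constant size relative to the size of $G$'' in Theorem~\ref{thm:small_ids} should be read as an identifier value bounded by a constant (equivalently, writable in $O(1)$ bits) regardless of $n$, and the bound $1+h$ furnished above is indeed such a constant since $h$ is fixed. One should also observe that the verifier invoked from Theorem~\ref{thm:small_ids} does not need to be told which vertex has the small identifier: the prover simply designates the globally minimum identifier as the spanning-tree root, and a vertex can locally check it is a plausible root by confirming its identifier lies in the $O(1)$-sized prefix $[1+k, 1+h]$ of the range and that no neighbor carries a smaller one (with the usual distance/counter certificates handling global consistency, exactly as in the proof of Theorem~\ref{thm:small_ids}). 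No new ideas beyond Theorem~\ref{thm:small_ids} are required.
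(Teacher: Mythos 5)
Your proof is correct and takes the same route as the paper, which simply states that the corollary ``follows directly from the theorem''; you supply the missing pigeonhole detail showing the minimum identifier is at most $1+h=O(1)$, which is exactly what is needed to invoke Theorem~\ref{thm:small_ids}. (Minor note: your first paragraph says ``at most $h-k+1$ \dots{} can fail to be used,'' which taken literally would not suffice, but your precise restatement with the correct bound $h-k$ in the second paragraph is the right one.)
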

The corollary follows directly from the theorem.

\newpage{}

\bibliographystyle{plain}
\bibliography{biblio-GS-bound}

\newpage{}

\appendix
\section{Proof of right-to-left implication of Theorem \ref{thm:grids}}\label{app:proof_grids}
In this appendix, we show the right-to-left implication of \Cref{thm:grids}.

\ThmGrid*

\begin{proof}
    We show that any yes-instance is always accepted.
    
    Assume that there is one selected vertex $s$ and that for every $v$ the certificate $P(v) = d(v,s)$. 
    If there is no selected vertex, an arbitrary vertex is chosen instead. 
    
    Because $P(v) = d(v,s)$, the conditions in points 1, 2, and 4 cannot be true. 
    Since there is at most one selected vertex, the condition in point $3$ will also never be true.
    The condition in point 5 is also never true, because the certificate is a distance from a specific vertex. If there is a vertex with distance $k$, there needs to be a neighbor with distance $k-1$.
    Lastly, the condition in point 6 cannot be true, since in a grid it is not possible for three different neighbors to be closer to a given vertex.

    We have thus shown both implications, and the theorem is proved.
    
\end{proof}

\section{Proof of Theorem \ref{thm:small_ids}}\label{app:proof_ids}

We give the full proof of the following theorem:

\ThmIDs*

\begin{proof}
    The vertex with the smallest identifier can be selected as the root of a spanning tree.
    Each vertex's certificate encodes the root's identifier and the distance from the root.
    In addition, there is an integer $x$ that represents the number of selected vertices in the subtree rooted at $v$.
 
     Formally, for each vertex $v$, the certificate $P(v) = (id,d,s)$, where $id,d,s$ are non-negative integers.
     Let $id_v,d_v$ and $s_v$ be the respective elements of $P(v)$ of a vertex $v$.

    Verification on vertex $v$ is done as follows:
    \begin{itemize}
        \item If there is a neighbor $u\in N(v)$ such that $id_v\neq id_u$, the verifier $\mathcal{A}(v)$ rejects.
        \item If $d_v = 0$ and $v \neq id_v$, $\mathcal{A}(v)$ rejects.
        \item If $v = id_v$ and $d \neq 0$, $\mathcal{A}(v)$ rejects.        
        \item If $d_v>0$ and:
        \begin{itemize}
            \item there is not exactly one neighbor $u\in N(v)$ such that $d_u = d_v-1$, $\mathcal{A}(v)$ rejects.
            \item any other neighbor $w\in N(v)\setminus u$ has $d_w \neq d_v$ and $d_w \neq d_v+1$, $\mathcal{A}(v)$ rejects.
        \end{itemize}
        \item If $v$ is not selected and $s_v\neq\sum_ws_w$ where $w\in N(v)$ such that $d_w\geq d_v$, $\mathcal{A}(v)$ rejects. 
        \item If $v$ is selected and $s_v\neq1+\sum_ws_w$ where $w\in N(v)$ such that $d_w\geq d_v$, $\mathcal{A}(v)$ rejects.
        \item Otherwise, $\mathcal{A}(v)$ accepts.
    \end{itemize}

    Correctness follows from the correctness of the spanning tree certification, see \emph{e.g.} \cite{KormanKP10}.

    As the root is the vertex with the smallest identifier, which is of constant size relative to the size of the graph, it can be encoded using a constant number of bits.
    The distance $d$ can be encoded using $\log D$ bits, as the graph has a bounded diameter $D$.
    The number of selected vertices is from the set $\{0,1\}$, if not, the certificate is rejected.
    The certificate is thus of size $\log D$.
    
\end{proof}

\section{Proof of Theorem~\ref{thm:dense-everywhere}}
\label{app:density}

Let us remind the theorem.

\ThmDenseEverywhere*

\begin{proof}
Let us describe the certification scheme. If no vertex is selected, the prover assigns a special label to all vertices, and the vertices can then verify that they are indeed not selected.
Otherwise, if there is exactly one selected vertex, the prover ensures that all vertices can reconstruct the same identifier of the selected vertex and verify whether it matches their status (selected or not).
More precisely, on a correct instance, the prover assigns to every vertex a certificate $(i,S[i])$, where $i$ is an integer in $O(\log n)$, hence encoded using $O(\log \log n)$ bits, and $S[i]$ is the $i$-th bit  of the identifier $S$ of the selected vertex. 
The verification proceeds as follows. Every vertex gathers all the pairs $(i, S[i])$ from its neighborhood, concatenates the bits in the correct order, and then checks with its neighbors that they have reconstructed the same identifier. 
If this is not the case, the vertex rejects; otherwise, it checks consistency with its input: if it is selected, the reconstructed identifier should be the same as its own identifier, and if it is not selected, it should be different. Note that this requires communication radius~2, since a node needs to know what identifiers its neighbors have reconstructed.
Note that we need a radius of at least 2 to allow one more round of communication to exchange the reconstructed identifiers. 

Clearly this protocol cannot accept a no-instance. 
The only thing left to prove is that all \emph{yes}-instances can be accepted. This boils down to checking that there is always a way for the prover to assign the pairs $(i,S[i])$ to the vertices so that, for every vertex, all the bits of $S$ are present in its neighborhood. This is where we need the lower bound on the degree, since with very small degree it would clearly be impossible. 

We prove this using the probabilistic method. Consider assigning a bit position to each vertex uniformly at random. Fix a vertex $v$ and a specific bit position $b$. Let $\beta$ be a constant such that the identifiers are encoded using at most $\beta \log n$ bits. 
The probability that $b$ does not appear in the neighborhood of $v$ is:

\[\left(1-\frac{1}{\beta \log n} \right)^{\alpha \log^2 n}
=
\left(1-\frac{(\alpha/\beta)\log n}{\alpha \log^{2} n} \right)^{\alpha \log^2 n} 
\leq 
e^{-\alpha \log n/\beta} 
\leq \frac{1}{n^3}.\]

For large enough $\alpha$ this is smaller than $1/n^3$. Now, taking a union bound over all vertices and all bits, we still get a probability that is strictly smaller than 1. 
Therefore, such an assignment where all vertices see all the bits must exist.

\end{proof}

\end{document}